\newtheorem{theorem}{Theorem}[section]
\newtheorem{definition}[theorem]{Definition}
\newtheorem{lemma}[theorem]{Lemma}
\newtheorem{corollary}[theorem]{Corollary}
\newcommand{\Z}{\mathbb{Z}}
\newcommand{\poly}{\textmd{poly}}
\newcommand{\qwelipse}{\qw \raisebox{.3em}{\ldots}}
\title{Unitarization Through Approximate Basis}
\author{ Joshua Cook } 
\begin{document}

\maketitle

\begin{abstract}
    We introduce the problem of unitarization. Unitarization is the problem of taking $k$ input quantum circuits that produce orthogonal states from the all $0$ state, and create an output circuit implementing a unitary with its first $k$ columns as those states. That is, the output circuit takes the $k$th computational basis state to the state prepared by the $k$th input circuit. We allow the output circuit to use ancilla qubits initialized to $0$. But ancilla qubits must always be returned to $0$ for \emph{any} input. Further, the input circuits may also use ancilla qubits, but we are only guaranteed the input circuits returns ancilla qubits to $0$ on the all $0$ input. Finally, we want it to leave any orthogonal states unchanged.
    
    The unitarization problem seems hard if the output states are neither orthogonal to or in the span of the computational basis states that need to map to them. In this work, we approximately solve this problem in the case where input circuits are given as black box oracles by probably finding an approximate basis for our states. This method may be more interesting than the application. This technique can be thought of as a sort of quantum analogue of Gram-Schmidt orthogonalization for quantum states.
        
    Specifically, we find an approximate basis in polynomial time for the following parameters. Take any natural $n$, $k = O\left(\frac{\ln(n)}{\ln(\ln(n))}\right)$, and $\epsilon = 2^{-O(\sqrt{\ln(n)})}$. Take any $k$ input quantum states, $(\ket{\psi_i})_{i\in [k]}$, on polynomial in $n$ qubits prepared by quantum oracles, $(V_i)_{i \in [k]}$ (that we can control call and control invert). Then there is a quantum circuit with polynomial size in $n$ with access to the oracles $(V_i)_{i \in [k]}$ that with at least $1 - \epsilon$ probability, computes at most $k$ circuits with size polynomial in $n$ and oracle access to $(V_i)_{i \in [k]}$ that $\epsilon$ approximately computes an $\epsilon$ approximate orthonormal basis for $(\ket{\psi_i})_{i\in [k]}$.
\end{abstract}

\section{Introduction}

Quantum computing is a way of performing computation that can simulate classical computation, but works in a fundamentally different way. Intuitively, quantum computers can achieve better performance than classical computers by using quantum interference. The requirements for this interference to happen restricts what state can be maintained in a way classical computers don't.

This often manifests in what is lovingly referred to as ``entangled garbage''. Often a quantum algorithm wants a circuit that for some function \(f:[N]\rightarrow [N]\) takes \(\ket{i}\) to \(\ket{f(i)}\). Sometimes though, we only have an algorithm that will take \(\ket{i}\ket{0^m}\) to \(\ket{f(i)}\ket{g(i)}\) for some function \(g\). If the function \(g\) is not constant, then it in some way ``records'' the input $i$ as entangled garbage. In classical computation, this distinction does not matter, since the extra bits can be ignored or reset. But for quantum algorithms, this entangled garbage can prevent interference.

We want to be able to perform this sort of mapping for an $f$ that maps $[N]$ to quantum states, where the mappings are given explicitly by a circuit for each $i \in [N]$ producing the state that $i$ should map to. In particular, we want to implement $f$ in such a way that there is no garbage on any input, and that we effect as few states as possible.

While in principle any $f$ mapping any inputs to orthogonal output states can be implemented by a unitary that does not use the ancilla qubits at all, using ancilla qubits may be very useful. In this work, we want to provide tools for efficiently constructing this type of a mapping for as general of an $f$ as possible without leaving any entangled garbage. In particular, we want to never produce garbage on \emph{any} input, and only disturb the input on the smallest subspace possible.

We show that if $f$ is some sparse function, that only takes a few ($O(\frac{\log(n)}{\log(\log(n))})$) orthogonal inputs states to a few orthogonal outputs, and we have a few circuits that can prepare these orthogonal states, then we can implement an $O(2^{O(-\sqrt{\log(n)})})$ approximation of $f$ in polynomial time. This is done through an iterative orthogonalization process for each state, and then careful order of this orthogonalization to keep overhead low.

\subsection{Result Overview}

In all our results, we will suppress a factor of $n$ for the number of qubits our state is on. This will simplify the analysis since most gates are controlled nots of the state being in a specific computational basis state, swaps of computational basis states, or application of an oracle. We also say a unitary $U$ on $n + p$ qubits prepares state $\ket{\psi}$ on $n$ qubits if $U\ket{0^n}\ket{0^p} = \ket{\psi}\ket{0^p}$.

The uniterization problem we try to solve is:
\begin{restatable}[Unitarization Definition]{definition}{UnitarizationDefinition}
\label{Unitarization}
    For natural $k$ and $\epsilon >0$, the $k$, $\epsilon$ unitierization problem is to find an oracle circuit $C$ so that:

    Given \((\ket{\psi_i})_{i \in [k]}\) orthonormal states on $n$ qubits, where for each $i \in [k]$, $\ket{\psi_i}$ is prepared by circuit \(V_i\), there is a unitary $U$ on $n$ qubits so that:
    \begin{enumerate}
        \item For all \(i \in [k]\), \(U\ket{i-1} = \ket{\psi_i} \).
        \item For all states \(\ket{\theta}\) orthogonal to each \((\ket{\psi_i})_{i \in [k]}\) and \((\ket{i})_{i \in [k]}\), \(U\ket{\theta} = \ket{\theta}\).
        \item $C$ is a circuit with oracle access to \((V_i)_{i\in [k]}\) that \(\epsilon\)-approximates $U$.
    \end{enumerate}
\end{restatable}
This is the problem of, given a way to compute $k$ orthogonal states, $\epsilon$ approximate a unitary that maps the first $k$ computational states to these states. While this might not seem hard at first, we explain why it is difficult in \cref{WhyHard}.

Then our main theorem is that for a modest $k$, and $\epsilon$, we can solve the unitarization problem.
\begin{restatable}[Unitarization Solution]{theorem}{CleanUnitary}
\label{CleanUnitary}
    The $k$, $\epsilon$ unitarization problem can be solved with a circuit $C$ of size
    \[\poly\left( k^k \left(\frac{1}{\epsilon}\right)^{\min\{k, \ln(1/\epsilon)\}}\right),\]
    computable with probability $1 - \epsilon$ in time polynomial in its size.
\end{restatable}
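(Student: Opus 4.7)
My plan is to reduce the unitarization problem to the approximate basis construction described in the abstract, then stitch the basis circuits together with computational-basis reflections to obtain the target unitary. First I would invoke the approximate basis procedure with precision $\epsilon' = \epsilon / k^{c}$ for a suitable constant $c$, producing, with probability at least $1-\epsilon/2$, circuits $W_1,\ldots,W_k$ such that each $W_i$ prepares a state $\ket{\tilde\psi_i}$ within $\epsilon'$ of $\ket{\psi_i}$ and such that the family $(\ket{\tilde\psi_i})_{i \in [k]}$ is $\epsilon'$-approximately orthonormal.

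Given these $W_i$, I would construct $\tilde U$ iteratively as $\tilde U = T_k T_{k-1} \cdots T_1$, where each $T_i$ acts as a rotation in the subspace spanned by $\ket{i-1}$ and $\ket{\tilde\psi_i}$ sending the first to the second and fixing the orthogonal complement. A natural implementation of $T_i$ uses the two reflections $R_i = W_i(2\ket{0^n}\bra{0^n} - I)W_i^\dagger$ about $\ket{\tilde\psi_i}$ and $R'_i = 2\ket{i-1}\bra{i-1} - I$ about $\ket{i-1}$; an appropriate product of these implements the needed rotation by the correct angle. Because each later $T_j$ for $j > i$ only involves $\ket{j-1}$ and $\ket{\tilde\psi_j}$, both approximately orthogonal to $\ket{\tilde\psi_i}$, previously placed images remain nearly fixed, and the total deviation can be controlled by triangle inequality.

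The error analysis then propagates: each $W_i$ accurate to $\epsilon'$ yields $R_i$ accurate to $O(\epsilon')$, each $T_i$ is $O(\epsilon')$-accurate, and chaining $k$ of them accumulates $O(k\epsilon')$ error, plus an additional $O(k\epsilon')$ slack from the imperfect orthonormality of the $\ket{\tilde\psi_i}$. Choosing $\epsilon'$ as a small polynomial multiple of $\epsilon/k$ absorbs this, and the $k^k (1/\epsilon)^{\min\{k,\ln(1/\epsilon)\}}$ factor in the final size I expect to trace directly to the size bound on the $W_i$, since the composition itself only adds constant overhead per $T_i$.

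The step I expect to be hardest is the ancilla bookkeeping. The oracles $V_i$, and thus the $W_i$ built from them, are only guaranteed to return ancilla to zero on the all-zero input, whereas $\tilde U$ must behave as a clean unitary on \emph{every} input, including states outside the span of the $\ket{\psi_i}$ and $\ket{i-1}$. Each reflection $R_i$ must therefore pair $W_i$ with $W_i^\dagger$ so that any ancilla introduced are uncomputed within the same block, and one must verify that this uncomputation is faithful on arbitrary inputs and not merely on the intended subspace. This verification is where the approximate basis guarantee must be strong enough for the triangle inequality to close the argument, and is the likely driver of the polynomial overhead in the final circuit size.
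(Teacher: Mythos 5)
Your reduction to the approximate-basis subroutine is the right starting point, and your observation that conjugating a reflection by $W_i$ and $W_i^\dagger$ handles ancilla uncomputation is sound. But the stitching step --- composing two-dimensional rotations $T_k\cdots T_1$, each acting in the plane of $\ket{i-1}$ and $\ket{\tilde\psi_i}$ --- has a genuine gap. You justify it by asserting that $\ket{j-1}$ is approximately orthogonal to $\ket{\tilde\psi_i}$ for $j>i$, but nothing in the problem guarantees this: the target states $\ket{\psi_i}$ may overlap arbitrarily with the computational basis states $\ket{0},\dots,\ket{k-1}$. Concretely, take $k=2$ with $\ket{\psi_1}=\ket{1}$ and $\ket{\psi_2}$ orthogonal to $\ket{1}$. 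Then $T_1$ sends $\ket{0}$ to $\ket{1}$, which lies squarely in the plane of $T_2$, so $T_2T_1\ket{0}=\ket{\psi_2}\neq\ket{\psi_1}$. The error is not $O(k\epsilon')$; it is order one. This is exactly the failure mode the paper isolates in its ``Why Isn't Unitarization Trivial?'' discussion, and it is why no ordering of pairwise swaps or rotations closes the argument when the $\ket{\psi_i}$ are neither contained in nor orthogonal to $\mathrm{span}\{\ket{i-1}\}$.

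The paper's route avoids this by never composing per-state rotations. It runs the approximate-basis procedure on the \emph{union} $(\ket{i-1})_{i\in[k]}\cup(\ket{\psi_i})_{i\in[k]}$ (not just on the $\ket{\psi_i}$), estimates the coordinates of every $\ket{\psi_i}$ in that $m\le 2k$ dimensional basis, stabilizes them via the Gram--Schmidt lemma into the columns of an exact $m\times m$ unitary, and then applies that single unitary all at once: the components along the basis are swapped into a $\lceil\log_2(m+1)\rceil$-qubit ancilla register, the small unitary is applied there exactly, and the components are swapped back (\cref{DefferedRotation}). Handling the whole subspace simultaneously is what lets the construction both hit every $\ket{\psi_i}$ and act as the identity on the orthogonal complement. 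To repair your proposal you would need to replace the product $T_k\cdots T_1$ with some such simultaneous treatment of the joint span; the per-plane rotations cannot be salvaged by reordering or by tightening $\epsilon'$.
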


Then by choosing appropriate $k$ and $\epsilon$, we can get polynomial time construction. For large $k$, we have
\begin{corollary}[Unitarization For Large $k$]
    For any \(k = O\left(\frac{\ln(n)}{\ln(\ln(n))}\right)\) and \(\epsilon = 2^{-O(\sqrt{\ln(n)})}\), the $k$, $\epsilon$ unitarization problem can be solved with a polynomial sized circuit,  computable with probability $1 - \epsilon$ in polynomial time.
\end{corollary}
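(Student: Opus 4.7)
The plan is to apply \cref{CleanUnitary} directly with the given parameter choices and verify, by a pair of routine asymptotic estimates, that the size bound collapses to a polynomial in $n$. There is no new combinatorial or quantum ingredient needed; the main content is just confirming that the two factors $k^k$ and $(1/\epsilon)^{\min\{k,\ln(1/\epsilon)\}}$ are both at most polynomial in $n$ for the stated regime.

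First I would handle the $k^k$ factor. Taking logarithms, $\ln(k^k) = k \ln k$. Using $k \le c\,\ln(n)/\ln\ln(n)$ for some constant $c$, we get $\ln k \le \ln\ln n + O(1)$, so $k \ln k \le c \ln n + O(k)$, which is $O(\ln n)$. Hence $k^k \le n^{O(1)}$.

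Next I would determine the active branch of the $\min$. With $\ln(1/\epsilon) = O(\sqrt{\ln n})$ and $k = \Theta(\ln n/\ln\ln n)$, the ratio
\[
\frac{\ln(1/\epsilon)}{k} = O\!\left(\frac{\sqrt{\ln n}\,\ln\ln n}{\ln n}\right) = O\!\left(\frac{\ln\ln n}{\sqrt{\ln n}}\right)
\]
tends to $0$, so for all sufficiently large $n$ we have $\min\{k,\ln(1/\epsilon)\} = \ln(1/\epsilon)$. Writing $\epsilon = 2^{-c\sqrt{\ln n}}$, this gives
\[
\left(\tfrac{1}{\epsilon}\right)^{\ln(1/\epsilon)} = 2^{\,c\sqrt{\ln n}\cdot c\sqrt{\ln n}\,\ln 2} = 2^{O(\ln n)} = n^{O(1)}.
\]

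Combining, the size bound from \cref{CleanUnitary} becomes $\poly\!\left(n^{O(1)}\cdot n^{O(1)}\right) = \poly(n)$, and the construction time is polynomial in this size, hence polynomial in $n$. The construction succeeds with probability $1-\epsilon$, matching the claim. The only subtlety to be careful about is correctly identifying which branch of $\min\{k,\ln(1/\epsilon)\}$ is active, since the bound would be worse if one mistakenly used the $k$ branch; once that is settled the calculation is mechanical.
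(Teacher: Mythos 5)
Your proposal is correct and is exactly the intended (omitted) calculation: plug the parameters into \cref{CleanUnitary} and check that $k^k = n^{O(1)}$ and $(1/\epsilon)^{\min\{k,\ln(1/\epsilon)\}} = n^{O(1)}$. One small simplification: you need not worry about which branch of the $\min$ is active, since $\min\{k,\ln(1/\epsilon)\} \leq \ln(1/\epsilon)$ unconditionally, so the bound $(1/\epsilon)^{\ln(1/\epsilon)} = 2^{O(\ln n)}$ applies regardless.
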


For smaller $k$, especially $k < \sqrt{\log(n)}$, we have the simpler result:
\begin{corollary}[Unitarization For Small $k$]
    For any \(k = O\left(\frac{\ln(n)}{\ln(\ln(n))}\right)\) and \(\epsilon = 2^{-O(\log(n)/k)}\), the $k$, $\epsilon$ unitarization problem can be solved with a polynomial sized circuit, computable with probability $1 - \epsilon$ in polynomial time.
\end{corollary}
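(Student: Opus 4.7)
The plan is to deduce the corollary directly from \cref{CleanUnitary} by checking that the stated parameter choices make the circuit size bound $\poly(k^k (1/\epsilon)^{\min\{k, \ln(1/\epsilon)\}})$ polynomial in $n$. There is nothing new to construct; the work is entirely an asymptotic computation, and the probability and construction time guarantees transfer verbatim.

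First I would bound $k^k$. With $k = O(\log n / \log \log n)$ we have $\log k = O(\log \log n)$, hence
\[\log(k^k) = k \log k = O\!\left(\frac{\log n}{\log \log n} \cdot \log \log n\right) = O(\log n),\]
so $k^k = 2^{O(\log n)} = \poly(n)$. This is the same estimate that governs the large-$k$ corollary, and it is precisely what forces the $\log n / \log \log n$ ceiling on $k$; any larger $k$ would make $k^k$ super-polynomial.

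Second I would bound $(1/\epsilon)^{\min\{k, \ln(1/\epsilon)\}}$. Using the trivial inequality $\min\{k, \ln(1/\epsilon)\} \leq k$ together with $\epsilon < 1$, I can drop the $\min$ and estimate
\[(1/\epsilon)^{\min\{k, \ln(1/\epsilon)\}} \leq (1/\epsilon)^k = 2^{k \log(1/\epsilon)} = 2^{O(k \cdot \log(n)/k)} = 2^{O(\log n)} = \poly(n).\]
This is exactly where the choice $\epsilon = 2^{-O(\log(n)/k)}$ is calibrated: the factor of $k$ in the exponent cancels the $1/k$ inside $\log(1/\epsilon)$. Multiplying the two $\poly(n)$ factors and applying the outer $\poly$ wrapper preserves polynomiality, giving a polynomial-size circuit, and the $1 - \epsilon$ success probability and polynomial-time construction follow immediately from \cref{CleanUnitary}.

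Since the entire argument is a substitution into an existing bound, there is no real obstacle. The only delicate step is the $k \log k$ estimate, which relies on the $\log \log n$ in the denominator of the bound on $k$; this is the one place where I actually need to expand rather than appeal to the trivial bound $\min \leq k$. Notably, one could tighten the analysis by tracking the other branch of the $\min$ when $\ln(1/\epsilon) < k$, but the weaker bound $\min \leq k$ is already sufficient for this corollary and keeps the calculation uniform across the stated range of parameters.
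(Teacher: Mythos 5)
Your proposal is correct and matches the paper's (essentially implicit) argument: the corollary is obtained by direct substitution of the parameter choices into the size bound of \cref{CleanUnitary}, checking that both $k^k = 2^{O(\log n)}$ and $(1/\epsilon)^{\min\{k,\ln(1/\epsilon)\}} \leq (1/\epsilon)^k = 2^{O(\log n)}$ are polynomial in $n$. The asymptotic estimates are right and nothing further is needed.
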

Notably, for $k = O(1)$, then $\epsilon = \poly(1/n)$.

Then \cref{CleanUnitary} is a corollary of our approximate basis theorem that the majority of this paper focuses on proving.
\begin{restatable}[Approximate Basis]{theorem}{ApproximateBasis}
\label{ApproximateBasis}
    For natural \(k\), let \((\ket{\psi_i})_{i \in [k]}\) be states on $n$ qubits prepared by \((V_i)_{i \in [k]}\).
    
    Then for every \(\epsilon > 0\), \(\gamma > 0\) there is an algorithm that runs in time 
    \[\poly\left(k^k \left(\frac{1}{\epsilon}\right)^{\min\{k, \ln(1/\epsilon)\}}\right) \ln(1/\gamma)\]
    that with probability at most \(\gamma\) fails to output some set of circuits $X$ such that:
    \begin{enumerate}
        \item For some orthonormal basis \(Y\), for all \(i \in [k]\), \(\Delta(Y, \ket{\psi_i}) < \epsilon\), and $X = (x_a)_{\ket{a} \in Y}$.
        \item $Y$ is in the subspace spanned by $(\ket{\psi_i})_{i \in [k]}$.
        \item For each $\ket{a} \in Y$, we have that \(x_a\) \(\eta_{x_a}\)-approximately prepares \(\ket{a}\).
        \item The total size of $X$ is small:
        \[\sum_{x \in X} |x| \leq \poly\left(k^k \left(\frac{1}{\epsilon}\right)^{\min\{k, \ln(1/\epsilon)\}}\right).\]
        \item The total error in the approximations is small:
        \[\sum_{x \in |X|} \eta_x \leq \epsilon .\]
    \end{enumerate}
\end{restatable}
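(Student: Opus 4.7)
The plan is to build the basis iteratively, one vector at a time, as the quantum analogue of Gram--Schmidt promised in the abstract. At the start of stage $j$ I maintain circuits $x_1, \ldots, x_{b_j}$ that approximately prepare an orthonormal set $\ket{a_1}, \ldots, \ket{a_{b_j}}$ spanning (up to small error) the span of $\ket{\psi_1}, \ldots, \ket{\psi_{j-1}}$. I then attempt to prepare a normalized version of $(I - P_j)\ket{\psi_j}$, where $P_j = \sum_{l \leq b_j} \ket{a_l}\bra{a_l}$; if the residual norm exceeds a threshold $\delta = \Theta(\epsilon/k)$, I append the preparation circuit and set $b_{j+1} = b_j + 1$, otherwise I skip state $j$. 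Guarantees (1) and (2) of \cref{ApproximateBasis} then follow: any skipped $\ket{\psi_j}$ lies within $\delta$ of the current span, and the final basis is by construction contained in the span of the input states.

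The core subroutine at stage $j$ is to construct a circuit preparing the normalized residual. Each $x_l$ induces a reflection $R_l = x_l (I - 2\ket{0}\bra{0}) x_l^{\dagger}$ through $\ket{a_l}$; since the $\ket{a_l}$ are approximately orthonormal, the composition $R_{b_j} \cdots R_1$ approximates the reflection through $\text{span}\{\ket{a_l}\}$. Sandwiching this with the reflection $R_{\psi_j}$ built from $V_j$ produces a Grover-type iterate that rotates $V_j \ket{0}$ toward the normalized residual. I first use amplitude estimation to approximate the residual norm $\alpha$ to a constant factor; if $\alpha < \delta$, I skip, and otherwise I run $\Theta(1/\alpha)$ rounds of amplitude amplification and output the resulting circuit as $x_{b_j+1}$. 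Since amplitude amplification keeps the preparation ``clean'' on the all-zero ancilla input, the output circuits satisfy the definition of \emph{prepares} used in the paper.

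Each stage multiplies the size of every reused $x_l$ by $\Theta(b_j / \alpha)$, so after $k$ stages the heaviest surviving circuit has size $\poly((k/\delta)^k) = \poly(k^k (1/\epsilon)^k)$, which accounts for the first regime of the bound. The tighter exponent $\ln(1/\epsilon)$ in the regime $k > \ln(1/\epsilon)$ arises from replacing standard amplitude amplification with a fixed-point variant based on a polynomial of degree $O(\ln(1/\epsilon))$, which caps the per-stage blowup at $(1/\epsilon)^{O(1)}$ and saturates the recursion early. The per-stage success probability of amplitude estimation is boosted to $1 - \gamma/k$ by $O(\log(k/\gamma))$ median-of-means repetitions, giving the $\ln(1/\gamma)$ factor, and the $\eta_{x_l}$ budgets are chosen so that they sum to at most $\epsilon$, yielding (3) and (5).

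The main obstacle is the cascade of approximation errors. At stage $j$, the reflection through $\text{span}\{\ket{a_l}\}$ inherits error of order $\sum_{l \leq b_j} \eta_{x_l}$; this perturbation is amplified by the $\Theta(1/\alpha)$ Grover rounds, and is then further reused every time $x_{b_j+1}$ appears in later stages. Obtaining the exponent $\min\{k, \ln(1/\epsilon)\}$ rather than $k$ alone requires coordinating the per-stage error budgets with the amplification strategy: naive amplitude amplification gives the $(1/\epsilon)^k$ regime, while the polynomial-degree variant saturates the recursion at depth $\ln(1/\epsilon)$ to give the alternative. Balancing these scheduling choices, and verifying that the final accumulated error on each $\ket{a}$ stays below its budgeted $\eta_{x_a}$, is where I expect the main calculational effort of the proof to go.
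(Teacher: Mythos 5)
Your stage-$j$ subroutine (reflections through the $\ket{a_l}$ composed into a reflection through their span, sandwiched with a reflection built from $V_j$, followed by amplitude amplification) is a genuinely different mechanism from the paper's orthogonalization procedure in \cref{Orthogonalization}, which instead repeatedly applies $V_j$ in superposition, records in ancilla registers which $\ket{\phi_i}$ the state collapsed onto, and then rotates the ancillas back using pre-estimated amplitudes. Your route could in principle be more query-efficient per stage ($O(1/\alpha)$ versus $O(\ln(1/\epsilon)/\alpha^2)$), and the paper itself speculates that a Grover-type speedup might exist, so this part of the plan is plausible modulo the usual care about not overshooting the rotation angle and about how the $\eta_{x_l}$ errors propagate through $\Theta(1/\alpha)$ reflection rounds.

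The genuine gap is your account of the exponent $\min\{k,\ln(1/\epsilon)\}$. A fixed-point amplification polynomial of degree $O(\ln(1/\epsilon))$ still costs at least $\Omega(1/\alpha)$ reflection rounds when the residual amplitude is $\alpha=\Theta(\epsilon/k)$, and even granting your claim that it ``caps the per-stage blowup at $(1/\epsilon)^{O(1)}$,'' the total size is the \emph{product} of per-stage blowups over up to $k$ stages, which is $(1/\epsilon)^{O(k)}$ --- exactly the first regime again. Nothing in your argument bounds the \emph{number of stages} that incur a $\poly(1/\epsilon)$ factor by $O(\ln(1/\epsilon))$, and that is the actual content of the second regime. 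The paper gets it by abandoning the fixed processing order $j=1,\dots,k$ and maintaining two tiers of basis circuits: an older, cheaper tier $Y'_j$ and the full current tier $Y_j$. Each new state is first orthogonalized against the cheap tier; the algorithm then greedily selects a remaining state whose residual \emph{relative to the cheap tier's residual} is at least a constant (so that step pays only an $O(\ln(1/\epsilon))$ factor, not $\poly(1/\epsilon)$), and only when no such state exists does it promote $Y_j$ to the cheap tier and increment a depth counter $d$. The invariant $\Delta(Y'_j,\ket{\psi_m})^2\le (2/3)^{d}$ for all unprocessed $m$ forces $d=O(\ln(k/\epsilon))$, and the $\poly(1/\epsilon)$ factor is paid only $d$ times. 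Your proposal needs some analogue of this scheduling-and-potential-function argument; without it you can only prove the $(1/\epsilon)^{k}$ bound, which is vacuous (superpolynomial) precisely in the regime $k>\ln(1/\epsilon)$ that the theorem is designed to cover.
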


The main tool used to prove \cref{ApproximateBasis} is an orthogonalization procedure. This procedure lets us take a circuit that only prepares one state \(\epsilon\) far from a computable subspace, and gives an approximation of a unitary that computes the orthogonal direction of that state. This combined with a Gram-Schmidt like process allows us to compute an approximate orthonormal basis.

First, to notate the orthogonal components:
\begin{definition}[Orthogonal Component]
    For some natural \(k\), let \((\ket{\phi_i}_{i \in [k]}\) be some orthonormal basis. Let \(\ket{\psi}\) be some state. Then for some unique state \(\ket{\phi}\),
    \[\ket{\psi} = \sum_{i \in [k]} (\bra{\phi_i}\ket{\psi}) \ket{\phi_i} + \left(\sqrt{1 - \sum_{i \in [k]} \|\bra{\phi_i}\ket{\psi}\|^2} \right)\ket{\phi}.\]
    Then define a function \(\Phi\) as
    \[\Phi((\ket{\phi_i}_{i \in [k]}, \ket{\psi}) = \ket{\phi}.\]
    Similarly, define function \(\Delta\) as
    \[\Delta((\ket{\phi_i}_{i \in [k]}, \ket{\psi}) = \sqrt{1 - \sum_{i \in [k]} \|\bra{\phi_i}\ket{\psi}\|^2} .\]
\end{definition}

Then our orthogonalization lemma is:
\begin{restatable}[State Orthogonalization]{lemma}{Orthogonalization}
\label{Orthogonalization}
    Let $\ket{\psi}$ be an $n$ qubit state prepared by $V$. Let $k$ be natural, and \((\ket{\phi_i})_{i \in [k]}\) be orthonormal states on $n$ qubits, where for each \(i \in [k]\), unitary \(U_i\) prepares \(\ket{\phi_i}\). For some \(\delta > 0\), suppose \(\Delta((\ket{\phi_i})_{i \in [k]}, \ket{\psi}) = \beta > \delta\).
    
    Then for all \(\epsilon > 0\), there exists a circuit, $C$, using oracle calls to $V$ and $U_i$ such that
    \begin{enumerate}
        \item $C$ \(\epsilon\)-approximately prepares $\Phi((\ket{\phi_i})_{i \in [k]}, \ket{\psi})$.
        \item $C$ has size \(O \left(\frac{k \ln(1/\epsilon)}{\delta^2}\right)\).
        \item $C$ can be constructed in time polynomial in its size.
        \item \(C\) only uses \(O \left(\frac{ \ln(1/\epsilon)}{\delta^2}\right)\) oracle calls to $V$, and each \(U_i\).
        \item Suppose $R$ is an oracle on $O(\log(k)) $ qubits so that \(R \ket{0} = \beta \ket{0} + \sum_{i \in [k]}  \bra{\phi_i}\ket{\psi} \ket{i} \).
        
        Then $C$ uses only \(O\left(\frac{\ln(1/\epsilon)}{\delta^2}\right)\) oracle calls to $R$.
    \end{enumerate}
\end{restatable}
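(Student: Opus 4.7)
The plan is to build $C$ via Grover-style amplitude amplification inside the two-dimensional plane spanned by the normalized ``bad'' component of $\ket{\psi}$ and the target $\ket{\phi} = \Phi((\ket{\phi_i})_{i\in[k]}, \ket{\psi})$. Because the $\ket{\phi_i}$ are pairwise orthonormal, the rank-one reflections $I - 2\ket{\phi_i}\bra{\phi_i}$ commute, so
\[R_{\mathrm{bad}} \;=\; \prod_{i\in[k]} U_i\,(I - 2\ket{0^n}\bra{0^n})\,U_i^\dagger \;=\; I - 2\sum_{i\in[k]}\ket{\phi_i}\bra{\phi_i}\]
is a reflection about the bad subspace, costing two calls to each $U_i$ and $O(k)$ multi-controlled gates (suppressing the factor of $n$). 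Analogously, $R_\psi = V(I - 2\ket{0^n}\bra{0^n})V^\dagger = I - 2\ket{\psi}\bra{\psi}$ reflects about $\ket{\psi}$ using two calls to $V$.

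\textbf{Key steps.} Writing $\ket{\psi} = \sqrt{1-\beta^2}\,\ket{\chi} + \beta\,\ket{\phi}$, where $\ket{\chi}$ is the normalized bad component, both $R_{\mathrm{bad}}$ and $R_\psi$ preserve the plane $\mathcal{P} = \mathrm{span}(\ket{\chi},\ket{\phi})$, and their product is a rotation of $\mathcal{P}$ by the angle $2\arcsin\beta$. I would then run a fixed-point amplitude amplification schedule using $\delta$ as the known lower bound on $\beta$, iterating for $T = O(\ln(1/\epsilon)/\delta^2)$ rounds so that the final state has overlap at least $1-\epsilon$ with $\ket{\phi}$. Each round uses one $R_\psi$ and one $R_{\mathrm{bad}}$, yielding the claimed circuit size $O(k\ln(1/\epsilon)/\delta^2)$ as well as the per-oracle query counts of $O(\ln(1/\epsilon)/\delta^2)$ against $V$ and each $U_i$.

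\textbf{The $R$ variant (item 5).} When the auxiliary oracle $R$ is available, observe that $R$ prepares exactly the register of amplitudes $(\beta,\alpha_1,\ldots,\alpha_k)$ with $\alpha_i = \bra{\phi_i}\ket{\psi}$ and with the $\ket{0}$ flag carrying amplitude $\beta$. By coupling $R$'s register to the main state via the controlled-$U_i$'s, the ``is-bad'' reflection driving the amplification can be implemented as $R\,(I - 2\ket{0}\bra{0})\,R^\dagger$, so each round uses only two calls to $R$ in place of $2k$ calls to the individual $U_i$'s, yielding the stated $O(\ln(1/\epsilon)/\delta^2)$ bound on calls to $R$.

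\textbf{Main obstacle.} The principal subtlety is handling the unknown exact value of $\beta$: a vanilla Grover rotation would over- or under-rotate into the good subspace, so the amplification must be robust to the gap between $\delta$ and $\beta$. Fixed-point amplitude amplification (or equivalently, a polynomial approximation of the sign function applied via quantum singular value transformation to a block encoding of the projector onto the bad subspace) resolves this at the expense of the $\ln(1/\epsilon)$ factor, but the resulting bookkeeping is where the proof gets technical. A secondary obstacle is ensuring the final object is a clean unitary circuit returning every ancilla to $\ket{0}$ rather than a postselected probabilistic procedure, which requires coherent uncomputation of all intermediate flag registers introduced during the amplification.
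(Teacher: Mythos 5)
Your amplitude-amplification route is genuinely different from the paper's proof, and for items 1--4 it is viable, arguably cleaner. The paper instead builds $C$ as a coherent ``repeat-until-success'': apply $V$, record in a fresh $\lceil\log k\rceil$-qubit flag register which $\ket{\phi_i}$ the state is (if any) and rotate that branch back to $\ket{0^n}$, repeat $l = O(\ln(1/\epsilon)/\delta^2)$ times so the still-bad branch has amplitude $(1-\beta^2)^{l/2} \le \epsilon$, and then coherently rotate all the flag registers back to $\ket{0}$ using $R^{-1}$ --- which is exactly where the amplitude data $(\beta, \bra{\phi_i}\ket{\psi})$ enters. Your reflections $V(I-2\ket{0}\bra{0})V^\dagger$ and $\prod_i U_i(I-2\ket{0}\bra{0})U_i^\dagger$ do act as $I-2\ket{\psi}\bra{\psi}$ and $I-2\sum_i\ket{\phi_i}\bra{\phi_i}$ on the ancilla-zero subspace and return each oracle's ancillas to $\ket{0}$ exactly (even though the oracles are only guaranteed on the all-zero input), so your ``secondary obstacle'' largely evaporates: in the Yoder--Low--Chuang form of fixed-point amplification there are no intermediate flag registers to uncompute, only phased reflections $Ve^{i\varphi\ket{0}\bra{0}}V^\dagger$. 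What your route buys: no extra ancilla registers, a circuit that never needs the estimated amplitudes, and only $O(\ln(1/\epsilon)/\delta)$ rounds rather than $O(\ln(1/\epsilon)/\delta^2)$ --- essentially the Grover-type speedup the paper poses as an open question in its Future Work. What the paper's route buys is an elementary, self-contained analysis (a geometric series plus one triangle inequality) whose per-oracle call counts slot directly into the error bookkeeping of \cref{ApproximateOrthogonalComponents}.

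The one concrete error is your treatment of item 5. $R$ acts on $O(\log k)$ qubits only, so $R(I-2\ket{0}\bra{0})R^\dagger$ is a reflection about the state $\beta\ket{0}+\sum_i\bra{\phi_i}\ket{\psi}\ket{i}$ inside that small detached register; it is not a reflection about $\mathrm{span}(\ket{\phi_i})$ in the main register, and ``coupling via the controlled-$U_i$'s'' is both undefined and would reintroduce the $2k$ calls to the $U_i$ per round that you claim to save (calls which item 4 separately permits anyway). In your construction the amplification simply never needs the amplitudes $\bra{\phi_i}\ket{\psi}$, so the honest way to discharge item 5 is to say $C$ makes zero calls to $R$; the item exists in the paper because its cleanup step does need those amplitudes, and isolating them in $R$ is what lets the downstream lemma charge the error of the estimated amplitudes as (error of $R$) times (number of $R$ calls). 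Beyond that, you still owe the fixed-point details you flagged: in particular that the iterate stays in $\mathrm{span}(\ket{\chi},\ket{\phi})$ with controlled phase, so the output is close to $\ket{\phi}$ itself and not merely to the good subspace.
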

The last requirement, about the $R$, is simply to separate the preprocessing for the orthogonal component circuit of $C$ from the circuit itself. This simplifies later analysis. An approximation of this $R$ can be computed in a straightforward manner.

Careful application of \cref{Orthogonalization} gives our approximate basis theorem, which we use to solve the unitarization problem..

\subsection{Why Isn't Unitarization Trivial?}
\label{WhyHard}

At first glance, unitarization may seem easy. It looks similar to a problem any new student in quantum computation might see. The concern is that we want our circuit $C$ to approximate $U$ on every input, which means that the ancilla qubits must always be returned to $0$.

To show the difficulty, let us first show some failed attempts and explain why they fail. Let us work in the simple case where we just have two states \(\ket{\psi_1}\) and \(\ket{\psi_2}\) prepared by an $n$ bit unitary \(V_1\) and \(V_2\) on \(\ket{1}\) and \(\ket{2}\).

First some notation. Informally (formally defined in \cref{Notation}), for basis state $a$, let $M_a$ be quantum operation that nots a qubit controlled on if that basis state is $a$. Similarly for any quantum state $\ket{\psi}$ that we can prepare, $M_\psi$ nots a qubit controlled on if the state is $\ket{\psi}$.

Now for a noble first attempt at unitarization for the case where we have 2 states. Let \(C_1\) be the following circuit:
$$\Qcircuit @C=1em @R=.7em {
	\lstick{\ket{a}} & \gate{M_1} \qwx[1] & \gate{M_2} \qwx[2] & \gate{V_1} & \gate{V_2} &  \gate{M_{\psi_1}} \qwx[1] & \gate{M_{\psi_2}} \qwx[2] & \qw & \\
	\lstick{\ket{0}} & \targ & \qw & \ctrl{-1} & \qw & \targ & \qw & \qw & \\
	\lstick{\ket{0}} & \qw & \targ & \qw & \ctrl{-2} & \qw &\targ & \qw & \\
}.$$

We can see that for \(i \in [2]\), we indeed have \(C_1 \ket{i} \ket{0^2} = \ket{\psi_i} \ket{0^2}\). And for state, $\ket{\phi}$, orthogonal to \(\ket{1}, \ket{2}, \ket{\psi_1}, \) and \(\ket{\psi_2}\), $\ket{\phi}$ is unchanged by \(C_1\). But what about \(C_1 \ket{\psi_1}\ket{0^2}\)?  Well, if \(\ket{\psi_1}\) is in the span of \(\ket{1}\) and \(\ket{2}\), then we also have \(C_1 \ket{\psi_1} \ket{0^2}  = \ket{a}\ket{0^2}\) for some state \(\ket{a}\). But if \(\ket{\psi_1}\) is orthogonal to \(\ket{0}\) and \(\ket{1}\), then \(C_1\ket{\psi_1} = \ket{\psi_1}\ket{10}\). This leaves some entangled garbage, which we don't want!

This can be fixed in this case. Let \(C_2\) be the following circuit:
$$\Qcircuit @C=1em @R=.7em {
	\lstick{\ket{a}} & \gate{M_1} \qwx[1] & \gate{M_2} \qwx[2] & \gate{V_1} & \gate{V_2} & \gate{M_{\psi_1}} \qwx[1] & \gate{M_{\psi_2}} \qwx[2] & \gate{V_1^{-1}} & \gate{V_2^{-1}} & \gate{M_1} \qwx[1] & \gate{M_2} \qwx[2] & \qw & \\
	\lstick{\ket{0}} & \targ & \qw & \ctrl{-1} & \qw & \targ & \qw & \ctrl{-1} & \qw & \targ & \qw & \qw & \\
	\lstick{\ket{0}} & \qw & \targ & \qw & \ctrl{-2} & \qw & \targ & \qw & \ctrl{-2} & \qw & \targ & \qw & \\
}$$
Now this works if \(\ket{\psi_1}\) and \(\ket{\psi_2}\) are orthogonal to \(\ket{1}\) and \(\ket{2}\). But if they are not, then again this will leave us with entangled garbage!

Now, we could do something like detect if \(\ket{\psi_i}\) is in the subspace of \(\ket{1}\) and \(\ket{2}\) or orthogonal. But what if it is only partially in the subspace, such as if \(\ket{\psi_1} = \frac{1}{\sqrt{2}}\left(\ket{1} + \ket{3}\right)\) and \(\ket{\psi_2} =\frac{1}{\sqrt{2}}\left(\ket{1} - \ket{3}\right)\)? It seems less obvious how to handle these kinds of cases in a general way.

\subsubsection{A More Sophisticated, ``Almost'' Solution}

There is an efficient algorithm that comes very close to solving the unitarization problem. The insight is simple: we know how to efficiently swap orthogonal quantum states without disturbing anything else. This is true even if our algorithms to produce these states do not return ancilla bits to 0 on other inputs.

So just look for $k$ computational basis states that are close to orthogonal from all the $\ket{\psi_i}$. If there are enough dimensions we can't calculate a brute force solution, then this can be found by just measuring our states many times and choosing basis states that were never measured. Call these states we found that are almost orthogonal to $\ket{\psi_i}$ as $(\ket{a_i})_{i \in [k]}$. Then we can approximate a unitary that approximately swaps $\ket{a_i}$ with $\ket{\psi_i}$. And of course we can make a unitary that takes $\ket{i-1}$ to $\ket{a_i}$. Then name $U$ the circuit that first swaps $(\ket{i-1})_{i \in [k]}$ with $(\ket{a_i})_{i \in [k]}$, and then approximately swaps $(\ket{a_i})_{i \in [k]}$ with $(\ket{\psi_i})_{i \in [k]}$.

This algorithm is much more efficient than the algorithm presented in this paper, running in time polynomial in $\frac{k}{\epsilon}$. It also $\epsilon$-approximates a unitary and maps each $\ket{i-1}$ to  $\ket{\psi_i}$.

The issue is that this does not necessarily approximately leave orthogonal states unchanged. In particular, the state $\ket{a_1}$ may be orthogonal to each $(\ket{i-1})_{i \in [k]}$ in addition to $(\ket{\psi_i})_{i \in [k]}$. Then our algorithm first swaps it with $\ket{0}$, then where it ends up depends on how large the $\ket{0}$ component of $\ket{\psi_1}$ is. Unless $\ket{\psi_1}$ is close to $\ket{0}$, it won't end close to $\ket{a_1}$.

Then we are left with an issue with this algorithm that looks very similar to the first proposed algorithm.

\subsection{Notation}
\label{Notation}

We denote by $[k] = \{i \in \Z: i \geq 1, i \leq k\}$. If we write $\ket{i}$ for some natural $i$, we mean the $i$th computational basis state, represented in the obvious binary way. The number of bits that $ith$ state is written in is normally implicit. We may also sometimes write the $0$ computational basis state as $\ket{0^j}$ for some natural $j$ just to make explicit the number of qubits.

For our gate set, we have gates that
\begin{enumerate}
    \item Perform single qubit operatons.
    \item Swap $O(n)$ bit computational basis states.
    \item Apply $O(n)$ bit oracles.
    \item Apply $O(n)$ bit oracle inverses.
    \item Apply any of the above controlled on if some set of qubits have a given, computational basis state.
\end{enumerate}

Let us define approximation.
\begin{definition}[Unitary Approximation]
    For \(\epsilon > 0\), integers \(n\) and \(p\), unitary $U$ on $n$ qubits and unitary $C$ on $n + p$ qubits, we say \(C\) \(\epsilon\)-approximates $U$ if for all $n$ qubit states \(\ket{a}\) we have
    \[\|(U \ket{a})\ket{0^p} - C \ket{a}\ket{0^p}\| \leq \epsilon.\]
\end{definition}
The key here is that we let $C$ use ancilla qubits, as long as it \emph{always} returns them close to $0$. $0$ approximating means implementing $U$ exactly, given access to ancilla bits.

A straightforward consequence of this definition is that unitary approximations can be composed.
\begin{lemma}[Approximate Composition]
    Suppose for integer \(l\), we have for each \(i \in [l]\) a unitary \(V_i\) on $n$ qubits, and a unitary \(C_i\) on \(n + p_i\) qubits such that \(C_i\) \(\epsilon_i\)-approximates \(V_i\).
    
    Let \(V = \prod_{i \in [l]} V_i\), \(p = \max_{i \in [l]}(p_i)\), \(C = \prod_{i \in [l]} C_i \otimes I_{p-p_i}\), and \(\epsilon = \sum_{i \in [l]} \epsilon_i\).
    
    Then $C$ \(\epsilon\)-approximates $V$. 
\end{lemma}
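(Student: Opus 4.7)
The plan is to prove this by induction on $l$, with the only real content being the triangle inequality together with the fact that every $C_i$, being a quantum circuit, is itself unitary and so preserves norms. The base case $l = 1$ is immediate (modulo the ancilla-padding observation below).

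For the inductive step, let $V' = \prod_{i=2}^{l} V_i$ and $C' = \prod_{i=2}^{l} C_i \otimes I_{p - p_i}$, and write $\widetilde{C}_1 = C_1 \otimes I_{p - p_1}$ so that $C = \widetilde{C}_1 C'$. For any $n$-qubit $\ket{a}$, I would insert the hybrid $\widetilde{C}_1 (V'\ket{a})\ket{0^p}$ and apply the triangle inequality:
\[
\bigl\|(V\ket{a})\ket{0^p} - C\ket{a}\ket{0^p}\bigr\|
\leq \bigl\|(V_1 V'\ket{a})\ket{0^p} - \widetilde{C}_1 (V'\ket{a})\ket{0^p}\bigr\|
+ \bigl\|\widetilde{C}_1 (V'\ket{a})\ket{0^p} - \widetilde{C}_1 C'\ket{a}\ket{0^p}\bigr\|.
\]
The first term is at most $\epsilon_1$: it is the approximation error of $\widetilde{C}_1$ against $V_1$ on the $n$-qubit input state $V'\ket{a}$. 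The second term equals $\|(V'\ket{a})\ket{0^p} - C'\ket{a}\ket{0^p}\|$ because $\widetilde{C}_1$ is a unitary and hence an isometry, and this quantity is at most $\sum_{i=2}^{l}\epsilon_i$ by the inductive hypothesis. Summing gives the desired bound $\sum_{i=1}^{l}\epsilon_i = \epsilon$.

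The only piece of bookkeeping, which I would dispatch as a one-line sub-lemma before the induction, is that padding an approximation with identity on extra ancillas preserves the approximation error: for any $C_i$ that $\epsilon_i$-approximates $V_i$ with $p_i$ ancillas and any $q \geq 0$, the padded circuit $C_i \otimes I_q$ still $\epsilon_i$-approximates $V_i$, because
\[
(C_i \otimes I_q)\ket{a}\ket{0^{p_i + q}} = \bigl(C_i\ket{a}\ket{0^{p_i}}\bigr) \otimes \ket{0^{q}},
\]
and the trailing $\ket{0^{q}}$ factor drops out of every norm estimate. This is also what legitimises applying the inductive hypothesis: the submultiplicative $\max$ over $\{p_i : i \geq 2\}$ in the inductive statement can be harmlessly replaced by the larger $p$, since the extra unused qubits remain in $\ket{0}$.

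There is no genuine obstacle here; this is the standard hybrid-argument proof that operator-norm-style approximations compose additively, and the lemma is stated in this paper only to fix terminology for the more substantive approximations built later (e.g.\ in \cref{Orthogonalization} and \cref{ApproximateBasis}).
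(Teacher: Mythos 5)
Your proof is correct: the hybrid/triangle-inequality argument with the unitarity of $C_1$ to pull the first factor out of the second term, plus the ancilla-padding observation, is exactly the standard argument, and the induction goes through. The paper itself states this lemma without proof (calling it a straightforward consequence of the definition), so your write-up simply supplies the omitted routine verification; there is nothing to compare against and no gap.
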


Now to be concrete, let us define state preparation.

\begin{definition}[State Preparation]
    For a state $\ket{\psi}$ on $n$ qubits, and a unitary $V$ on $n + m$ qubits for some $p$, we say that $V$ prepares $\ket{\psi}$ if
    \[V\ket{0^n}\ket{0^m} = \ket{\psi}\ket{0^m}.\]
    
    We say $V$ $\epsilon$-approximately prepares $\ket{\psi}$ if $V$ $\epsilon$-approximates a unitary that prepares $\ket{\psi}$.
\end{definition}

Note we made no assertions on what $V$ does for any other inputs.

For notation, define the unitaries that swap computational basis states:
\begin{definition}[Basis Zero Swap]
\label{BasisToZero}
    For computational basis state \(\ket{a}\) let \(P_a\) be the unitary that swaps \(\ket{0}\) with \(\ket{a}\).
\end{definition}

We define the $\textmd{OR}$ and $M_0$ as:
\begin{definition}[Or and $M_0$]
\label{OrAndNor}
    For any gate $A$, the gate $M_0$ is defined so that the following circuit
    $$\Qcircuit @C=1em @R=.7em {
    	\lstick{\ket{a}} & \gate{M_0} \qwx[1] & \qw &  \\
    	\lstick{\ket{b}} & \gate{A} & \qw & \\
    }$$
    (in superposition) applies $A$ to register $b$ if and only if register $a$ is the state $\ket{0}$.
    
    Similarly, the gate $\textmd{OR}$ is defined so that
    $$\Qcircuit @C=1em @R=.7em {
    	\lstick{\ket{a}} & \gate{\textmd{OR}} \qwx[1] & \qw &  \\
    	\lstick{\ket{b}} & \gate{A} & \qw & \\
    }$$
    (in superposition) applies $A$ to register $b$ if and only if register $a$ is \emph{not} the state $\ket{0}$.
\end{definition}

And now our control gates.

\begin{definition}[Controlled on Measurements]
\label{CMNot}
    For basis state \(\ket{a}\), define $M_a$ so that for any gate $A$:
    $$\Qcircuit @C=1em @R=.7em {
    	\lstick{\ket{\phi}} & \gate{M_a} \qwx[1] & \qw & = & & \lstick{\ket{\phi}} & \gate{P_a^{-1}} & \gate{M_0} \qwx[1] & \gate{P_a} & \qw & \\
    	\lstick{\ket{b}} & \gate{A} & \qw & \pureghost{=} & & \lstick{\ket{b}} & \qw  & \gate{A}  & \qw & \qw & \\
    }.$$
    
    Similarly for \(\ket{\psi}\) on $n$ qubits $\epsilon$-prepared by unitary $V$ on $n + p$ qubits, define \(M_{\psi}\) so that for any gate $A$ 
    $$\Qcircuit @C=1em @R=.7em {
    	\lstick{\ket{\phi}} & \gate{M_{\psi}} \qwx[2] & \qw & & & \lstick{\ket{\phi}} & \multigate{1}{V^{-1}} & \multigate{1}{M_0} & \multigate{1}{V} & \qw & \\
    	                                       &  &  & = & & \lstick{\ket{0^p}} & \ghost{V^{-1}} & \ghost{M_0} \qwx[1] & \ghost{V} & \qw & \\
    	\lstick{\ket{b}} & \gate{A} & \qw & \pureghost{=} & & \lstick{\ket{b}} & \qw & \gate{A} & \qw & \qw & \\
    }.$$
\end{definition}
Note that the ancilla register in our notation here is left implicit. However, one can observe that that if $v$ is a $0$ approximation, then this always returns that register to $0$. So this still gives a $2\epsilon$ approximation of the controlled application if $V$ $\epsilon$-approximately prepares $\ket{\psi}$.

Finally, to clarify what $\epsilon$ approximately computes an $\epsilon$ approximate basis in the abstract means, we define:
\begin{definition}[$\epsilon$ Approximate Basis]
    For any set of quantum states $X$, and set of orthogonal quantum states $Y$, we say that $Y$ is an $\epsilon$ approximate basis for $X$ if for all $x \in X$, $\Delta(Y, x) < \epsilon$.
    
    We say that a process, $A$, outputting quantum states, $Z$, $\delta$ approximately computes an $\epsilon$ approximate basis for $X$ if $Z$ is a set of states within $\delta$ of $Y$.
\end{definition}
Then with this language, \cref{ApproximateBasis} can be viewed as giving a circuit that with probabality $1- \gamma$ $\epsilon$ approximately computes an $\epsilon$ approximate basis.

\subsection{Proof Idea}
\label{proofIdea}

There are two main ideas in this paper: an orthogonalization procedure, and a well conditioned orthogonalization order.

At a high level, we orthogonalize a state $\ket{\psi}$ with respect to orthogonal states $\ket{\phi_1}, \ldots, \ket{\phi_k}$ by first estimating the amplitude of $\ket{\psi}$ along each $\ket{\phi_i}$.  Then, all in superposition, we prepare $\ket{\psi}$, and record in an ancilla register which of the $\ket{\phi_i}$ the state is. If it was any of these, we prepare $\ket{\psi}$ again. After many iterations, we know that the final state is ``almost'' the orthogonal component of $\ket{\psi}$, with something in the ancilla registers related to the amplitudes we estimated. We can fix the state we are ``almost in'' to just be the orthogonal component of $\ket{\psi}$, and since this state is close to our actual state, this will almost convert our state into the orthogonal component of $\ket{\psi}$.

The time of the above orthogonalization procedure depends on how close $\ket{\psi}$ is to the subspace spanned by the $\ket{\phi_i}$ and how long each of the $\ket{\phi_i}$ take to compute. More importantly, the error in the orthogonal component depends on the product of the error in the $\ket{\phi_i}$ and how long it takes to compute.

The straightforward way to orthogonalize $k$ states to within $\epsilon$ is to just just keep a running set of $\ket{\phi_i}$, and orthogonalize some new $\ket{\psi}$, skipping any $\ket{\psi}$ if it is already within $\epsilon$ of the computed basis. Then roughly speaking, if we internally try to estimate the true orthogonal components to within $\gamma$, at worst case, we might incur an extra poly in $\ln(1/\gamma)/\epsilon$ factor of error and time after every orthogonalization. This gives a total error polynomial in $\gamma (\ln(1/\gamma) /\epsilon)^k$ and time polynomial in $(\ln(1/\gamma) /\epsilon)^k$. For the error to be less than $\epsilon$, we need $\gamma < \epsilon^k$. This implies the time of the algorithm is polynomial in $(k /\epsilon)^k$.

For this to be polynomial, we need $k = O(\frac{\log(n)}{\log(\log(n))})$. But for $k = \Omega(\frac{\log(n)}{\log(\log(n))})$, we can only get $\epsilon = \poly(1/\log(n))$ in polynomial time.. This is a pretty severe limitation in $\epsilon$, we would like to get $\epsilon$ polynomially small in $n$. We get halfway there: to $\epsilon = 2^{-O(\sqrt{\log(n)})}$.

Our basic idea is to keep two sets of vectors we have already orthogonalized, $A$, and $B$. The idea is to first orthogonalize a new $\ket{\psi}$ with respect to $A$, then with respect to $B$. $B$ will hopefully be further from $\ket{\psi}$ and have more recent states in it, so we don't have to pay this extra $1/\epsilon$ factor on top of how long it takes to compute states in $B$.

Everytime we orthogonalize a state, we add it to $B$. If there is any $\ket{\psi}$ remaining that is far from $B$, orthogonalize that. Otherwise, add all of $B$ to $A$ and continue. The key here is that we cannot add $B$ to $A$ more than $O(\ln(1/\epsilon))$ times before every remaining state is within $\epsilon$ of the span of $A$.

\subsubsection{Orthogonalization Procedure}

Now to outline the orthogonalization procedure in more detail. First take a natural $k$, orthonomal states \((\ket{\phi_i})_{i \in [k]}\) prepared by unitaries \((U_i)_{i \in [k]}\) and state \(\ket{\psi}\) prepared by unitary \(V\).

\begin{enumerate}
    \item First, estimate the angles of \(\ket{\psi}\) to each of \((\ket{\phi_i})_{i \in [k]}\). This will also implicitly tell us \(\delta = \Delta(\ket{\phi_i})_{i \in [k]}, \ket{\psi})\). Luckily, if \(\delta\) is too small, we don't have to worry about this component in our approximate basis.
    
    Let \(\ket{\phi} = \Phi((\ket{\phi_i})_{i \in [k]}, \ket{\psi})\).
    
    \item Assume the input is \(\ket{0^n}\). We will be making a subroutine \(A\) that will within \(\epsilon\) take \(\ket{0^n}\) to \(\ket{\phi}\). $A$ will run on a primary register and several registers of $\lceil\log(k)\rceil$ ancilla qubits. $A$ first runs \(V\) on the primary register. Then for each $i \in [k]$, it will, controlled on if the primary register is \(\ket{\phi_i}\), set the ancilla register to state $\ket{i}$.
    
    The part of the state that has not written into the ancilla register is, informally, correct. For these other cases, we need to rotate the primary register back to $\ket{0^n}$. We will then repeat a similar process.
    
    Here is the circuit for this first step.
    
    $$\Qcircuit @C=1em @R=.7em {
    	\lstick{\ket{0^n}} & \gate{V} & \gate{M_{\phi_i}} \qwx[1] & \qwelipse & \gate{M_{\phi_k}} \qwx[1] & \gate{U_1^{-1}} & \qwelipse & \gate{U_k^{-1}} & \qw &\\
    	\lstick{\ket{0}} & \qw & \gate{P_{1}}  & \qwelipse &  \gate{P_{k}} & \gate{M_1} \qwx[-1] & \qwelipse & \gate{M_k} \qwx[-1] &\qw &\\
    }$$
    At the end (step 4), we will need to come back and clean up these ancilla registers by rotating them back to $\ket{0}$, but we know what angle we need to rotate them by due to step 1.
    
    \item Then repeatedly, at step $j$, if the $j-1$ register is not 0, apply $V$ again, and record if the state is $\ket{\phi_i}$ again. The circuit for one iteration is:
    
    $$\Qcircuit @C=1em @R=.7em {
    	\lstick{\ket{a}} & \gate{V} & \gate{M_{\phi_1}} \qwx[3] & \qwelipse & \gate{M_{\phi_k}} \qwx[3] & \gate{U_1^{-1}} & \qwelipse & \gate{U_k^{-1}} & \qwelipse & \qw & \\
    	\lstick{\vdots} & \pureghost{\vdots} & \\
    	\lstick{\ket{b_{j-1}}} & \gate{\textmd{OR}} \qwx[-2] & \qw  & \qwelipse & \qw & \qw & \qwelipse & \qw & \qwelipse &  \qw &\\
    	\lstick{\ket{0}} & \qw & \gate{P_1} & \qwelipse & \gate{P_k}  & \gate{M_1} \qwx[-3] & \qwelipse & \gate{M_k} \qwx[-3] & \qwelipse & \qw &\\
    	\lstick{\vdots} & \pureghost{\vdots} & \\
    }.$$
    
    Let \(\eta = \sqrt{1 - \delta^2}\). Then the magnitude of the current state that has a non zero $j$th ancilla register is \(\eta^j\). Eventually, \(\eta^j < \epsilon\) for \(j = O \left(\frac{\ln(1/\epsilon)}{\delta^2}\right)\). And if the last ancilla register is $\ket{0}$, then the entangled primary register is exactly \(\ket{\phi}\).
    
    \item Now we need to clean up the ancilla bits. The idea is that we are going to directly rotate them back into place. So consider the ideal state that is exactly the same, except the primary register is exactly \(\ket{\phi}\). This has inner product \(1 - \eta^j\) with the actual state.
    
    Now, we fix the ideal state by rotating the bits back as if it was exactly this ideal state. Since we know the components from step 1, we can do this. This leaves us with the state \(\ket{\phi}\ket{0}^j\) on input \(\ket{0^n}\ket{0}^j\). Since the actual state we are acting on is only \(\epsilon\) off from the one we analyzed, the final state is also within \(\epsilon\).
    
    Let $R$ be a unitary such that
    \[R\ket{0} = \delta \ket{0} +  \sum_{i = 1}^k  \bra{\phi_i} \ket{\psi} \ket{i} .\]
    Then the final part of the circuit for $A$ is
    
    $$\Qcircuit @C=1em @R=.7em {
    	\lstick{\ket{a}} & \qw & \qw  & \qwelipse & \qw & \qw & \qw & \\
    	\lstick{\ket{b_1}} & \qw & \qw & \qwelipse & \gate{\textmd{OR}} \qwx[1] & \gate{R^{-1}} & \qw &\\
    	\lstick{\ket{b_2}} & \qw & \qw & \qwelipse & \gate{R^{-1}} & \qw & \qw &\\
    	\lstick{\vdots} & \pureghost{\vdots} & \\
    	\lstick{\ket{b_{j-2}}} & \qw & \gate{\textmd{OR}} \qwx[1] & \qwelipse & \qw & \qw & \qw &\\
    	\lstick{\ket{b_{j-1}}} & \gate{\textmd{OR}} \qwx[1] & \gate{R^{-1}} &\qwelipse & \qw & \qw & \qw &\\
    	\lstick{\ket{b_j}} & \gate{R^{-1}} & \qw & \qwelipse & \qw & \qw & \qw &\\
    }$$
    
    This completes subroutine A.
\end{enumerate}

One may also wonder why, when computing $A$, we do so in superposition, and not actually measure if we are in one of the $\ket{\phi_i}$. The reason why we cannot do this is the result is no longer unitary, and then we cannot control on it. We need $A$ to be unitary to perform $M_\psi$ from \cref{CMNot}.

\subsection{Previous Work}

This result is a refinement of ideas from a class project \cite{Coo19} and strictly improves those results. The specific technique in this paper, creating an approximate basis for several quantum states, seems new, though the basic techniques already existed in that earlier work.

Other work \cite{Kre20, ARU14, AKKT20, BR20} has also used inherently quantum oracles. In these works, for some state \(\ket{\psi}\), they use a quantum oracle, \(O_\psi\), that swaps \(\ket{\psi}\) with some known, computable state \(\ket{\perp}\), and changes no orthogonal state (or they use a similar, equivalent oracle). This is in contrast to the more common case where an oracle implements a classical algorithm, but may be invoked in super position.

Kretschmer \cite{Kre20} showed that \(O_\psi\) is equivalent to a a unitary that prepares \(\ket{\psi}\), but on a larger state space. Our work differs in that Kretschmer expands the state space of \(O_\psi\) to add \(\ket{\perp}\), while we insist on the final algorithm being unitary on the original space directly, and not some larger space.

Other works have been done on simulating unitaries based on a description of each entry in its matrix \cite{BC12}, or by simulating the hamiltonian of a unitary based on a description of its entries (\cite{BC12, BCK15}). In contrast, our work focuses on running a unitary where only the first few columns are given by black box oracles that produce them. These previous techniques scale polynomially with the sparsity of their matrices. Our technique doesn't rely on sparsity, but instead runs in time exponential in the number of dimensions of the subspace the unitary acts on.

Quantum speedup has also been used in various other linear algebra problems, for instance the famous HHL algorithm \cite{HHL09}. The HHL algorithm takes a sparse matrix $A$, and a quantum state $b$ and gives an efficient way to prepare the state $x$ so that $Ax = b$. The HHL algorithms still relies on the sparsity of matrices for efficiency.

The motivation of this work is to perform operations without computing garbage. Computation of unwanted, entangled garbage has been a problem in many other settings, such as QSampling. Broadly, these kinds of problems often fall into the paradigm of 'Quantum State Generation' \cite{AT03}. A common approach to generate quantum states is to express them as the ground state of a hamiltonian, then find it. Finding ground states of hamiltonians has been researched extensively (\cite{WY21}).

Our technique uses ideas from the swap test \cite{BCWD01} to estimate amplitudes of given quantum states along each other. The final algorithm is very similar to Gram-Schmidt orthogonalization.

There has been previous work on a problem called orthogonalization \cite{VAK13}, which refers to a completely different problem, with no known relations to our work. In that line of work, orthogonalization is the problem of given a quantum state $\ket{\psi}$, produces an orthogonal state $\ket{\perp}$. While there is no single unitary that can do this for all states $\ket{\psi}$, such work focuses on producing an orthogonal state using as little information about $\ket{\psi}$ as possible, and realizing it in physical systems.

\subsection{Paper Layout}

In section 2, we give some basic results about approximating components of a quantum state, and basic techniques for combining rotations to create arbitrary unitaries. In section 3, we formally give and prove our orthogonalization procedure, \cref{Orthogonalization}. In section 4, we go through the details required to construct our approximate basis, \cref{ApproximateBasis}, and then use it to prove \cref{CleanUnitary}.

\section{Sampling Angles and Rotations}

In this section, I show how you can approximate the components of a state from an oracle computing it. Then I show how to apply any unitary on a subspace we can compute all the basis vectors of with these angle approximations.

First, let us prove that we can estimate the components of one state along another with high probability.
\begin{lemma}[Angle Approximation]
\label{AngleApproximation}
    Let \(\ket{\psi}\) and \(\ket{\phi}\) be states on $n$ qubits prepared by $V$ and $U$, respectively.
    
    Then for every \(\epsilon, \gamma > 0\), there is a quantum circuit with size polynomial in \(O\left(\frac{\ln(1/\gamma)}{\epsilon^2} \right)\) making \(O\left(\frac{\ln(1/\gamma)}{\epsilon^2} \right) \) oracle calls to $V$ and $U$ computing complex number \(C\) with \(\|C - \bra{\phi}\ket{\psi}\| \leq \epsilon\) with probability at least \(1 - \gamma\).
\end{lemma}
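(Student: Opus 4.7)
The plan is to use a standard Hadamard test to obtain an unbiased single-shot estimator for the real and imaginary parts of $\bra{\phi}\ket{\psi}$, then repeat $O(\ln(1/\gamma)/\epsilon^2)$ times and apply Hoeffding's inequality to boost the confidence.

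First I would set up the single-shot estimator. Introduce one fresh ancilla qubit $a$ in state $\ket{+}$, a primary register $\ket{0^n}$, and whatever extra ancillas $U$ and $V$ require. Controlled on $a$ being $\ket{0}$ apply $U$, and controlled on $a$ being $\ket{1}$ apply $V$. Since $U$ and $V$ are state preparations they each return their private ancilla register to $\ket{0}$, so the resulting state on $(a, \text{primary})$ is
\[
\tfrac{1}{\sqrt{2}}\bigl(\ket{0}\ket{\phi} + \ket{1}\ket{\psi}\bigr).
\]
Apply $H$ to $a$ and measure in the computational basis; the probability of outcome $0$ is $\tfrac{1}{2}\bigl(1 + \operatorname{Re}\bra{\phi}\ket{\psi}\bigr)$. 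Inserting an $S^\dagger$ on the ancilla before the final Hadamard instead yields outcome-$0$ probability $\tfrac{1}{2}\bigl(1 + \operatorname{Im}\bra{\phi}\ket{\psi}\bigr)$. All of these gates are in the allowed gate set (single qubit gates, controlled oracle calls, controlled inverse oracle calls).

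Next I would boost. Run the real-part test $M = O(\ln(1/\gamma)/\epsilon^2)$ independent times and let $\tilde R$ be $2 \bar X - 1$, where $\bar X$ is the fraction of $0$ outcomes; do the same for the imaginary part to obtain $\tilde I$. Each trial is a bounded random variable, so Hoeffding's inequality gives $|\tilde R - \operatorname{Re}\bra{\phi}\ket{\psi}| \leq \epsilon/\sqrt{2}$ with probability $\geq 1-\gamma/2$ once the constant in $M$ is tuned, and similarly for $\tilde I$. Output $C = \tilde R + i\tilde I$; by a union bound, $|C - \bra{\phi}\ket{\psi}| \leq \epsilon$ with probability $\geq 1 - \gamma$. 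Each trial uses one controlled call to each of $U$ and $V$, so the oracle and gate counts meet the claimed bound.

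There is no substantial obstacle: the Hadamard test is textbook and the boosting is classical concentration. The only minor care points are (i) that the test requires \emph{controlled} $U$ and $V$, which the paper explicitly includes in the gate set; (ii) keeping the private ancillas of $U$ and $V$ cleanly separated so that the controlled applications really produce $\ket{0}\ket{\phi}$ and $\ket{1}\ket{\psi}$ on the shared primary register, which works because $U$ and $V$ are state-preparation unitaries that return their ancillas to $\ket{0}$ on $\ket{0^n}$; and (iii) that a single complex estimate with accuracy $\epsilon$ follows by splitting the error budget across the real and imaginary estimators.
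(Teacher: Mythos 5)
Your proof is correct and takes essentially the same approach as the paper: both are Hadamard-test interference circuits (the paper controls $V$ followed by $U^{-1}$ on the ancilla being $1$ and measures both registers, while you control $U$ and $V$ on the two ancilla values and measure only the ancilla), followed by a phase-shifted variant for the imaginary part and a Chernoff/Hoeffding bound to get the $O(\ln(1/\gamma)/\epsilon^2)$ sample count. The differences are cosmetic, and your handling of the oracles' private ancillas is sound.
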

\begin{proof}
    Let $m$ be the number of ancilla bits needed by $U$ and $V$.

    We will do something similar to a swap test. First we Hadamard an ancilla bit. Then conditioned on it being one, we apply \(V\) then \(U^{-1}\). Finally, we Hadamard the ancilla bit and measure both it and primary register.
    
    $$\Qcircuit @C=1em @R=.7em {
    	\lstick{\ket{0^n}} & \qw & \multigate{1}{V} & \multigate{1}{U^{-1}} & \qw & \meter &\qw & \\
    	\lstick{\ket{0^m}} & \qw & \ghost{V} & \ghost{U^{-1}} & \qw & \qw & \qw & \\
    	\lstick{\ket{0}} & \gate{H}  & \ctrl{-1} & \ctrl{-1} & \gate{H} & \meter &\qw & \\
    }$$
    
    Let \(C' = \bra{\phi}\ket{\psi}\). Then the probability of measuring a zero in the ancilla bit and \(\ket{0}\) in the first register is
    \[\frac{\|1 + C'\|^2}{4} = \frac{1}{4}\left(\textmd{complex}(C')^2 + (1 + \textmd{real}(C'))^2\right).\]
    Similarly, the probability of measuring a one in the ancilla bit and \(\ket{0}\) in the register is
    \[\frac{\|1 - C'\|^2}{4} = \frac{1}{4}\left(\textmd{complex}(C')^2 + (1 - \textmd{real}(C'))^2\right).\]
    Taking the difference of these probabilities directly gives us the real part of \(C'\). By adding a phase shift by \(i\) on \(V\), we can similarly estimate the complex component of \(C'\).
    
    Using Chernoff bounds, with probability at least \(1 - \gamma\), we can approximate \(C'\) within \(\epsilon\) using only \(O\left(\frac{\ln(1/\gamma)}{\epsilon^2} \right) \) calls to $V$.
\end{proof}

We also need a slightly stronger version of this theorem, saying that if we can approximately compute a state \(\ket{\phi}\), then we can approximately compute the inner product of \(\ket{\psi}\) and \(\ket{\phi}\).
\begin{lemma}[Robust Angle Approximation]
\label{RobustAngleApproximation}
    Let \(\ket{\psi}\) be a state on $n$ qubits prepared by $V$. Let \(\ket{\phi}\) be a state $\eta$-prepared by $U$.
    
    Then for every \(\epsilon, \gamma > 0\), there is a quantum circuit with size polynomial in \(O\left(\frac{\ln(1/\gamma)}{\epsilon^2} \right)\) making \(O\left(\frac{\ln(1/\gamma)}{\epsilon^2} \right) \) oracle calls to $V$ and $U$ computing complex number \(C\) with \(\|C - \bra{\phi}\ket{\psi}\| \leq \epsilon + \eta\) with probability at least \(1 - \gamma\).
\end{lemma}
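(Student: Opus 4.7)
The plan is to run exactly the same circuit as in the proof of \cref{AngleApproximation}, substituting the given oracle $U$ (which only $\eta$-prepares $\ket{\phi}$) for the exact preparer. The measurement statistics will then estimate a slightly perturbed inner product, and I will bound the perturbation by $\eta$, so a triangle inequality yields the claimed $\epsilon+\eta$ error.

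Concretely, let $U^{*}$ be the exact unitary that $U$ is stated to $\eta$-approximate, i.e.\ $U^{*}\ket{0^n}\ket{0^m}=\ket{\phi}\ket{0^m}$ and, by the definition of unitary approximation, $\|(U-U^{*})\ket{0^n}\ket{0^m}\|\leq\eta$. Repeating the analysis of \cref{AngleApproximation} with $U^{-1}$ in the place of $(U^{*})^{-1}$, the difference of the two measurement probabilities (with and without the extra phase shift by $i$ on $V$) are straightforward linear functions of the real and imaginary parts of the quantity
\[\tilde{C} \;=\; \bra{0^n}\bra{0^m} U^{-1}\, \ket{\psi}\ket{0^m},\]
rather than of $C'=\bra{\phi}\ket{\psi}$. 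Exactly the same Chernoff-bound argument then gives a number $C$ with $\|C-\tilde{C}\|\le\epsilon$ with probability at least $1-\gamma$, using $O(\ln(1/\gamma)/\epsilon^{2})$ calls to $V$ and $U$.

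The only new ingredient is the bound $\|\tilde{C}-C'\|\le\eta$. I would prove this by taking the complex conjugate to move the $U$-vs-$U^{*}$ discrepancy onto the ancilla-zero side, where the $\eta$-approximation hypothesis directly applies:
\[\tilde{C}-C' \;=\; \bra{0^n}\bra{0^m}\bigl(U^{-1}-(U^{*})^{-1}\bigr)\ket{\psi}\ket{0^m} \;=\; \overline{\bra{\psi}\bra{0^m}(U-U^{*})\ket{0^n}\ket{0^m}}.\]
Applying Cauchy--Schwarz (with $\|\ket{\psi}\ket{0^m}\|=1$) bounds the magnitude by $\|(U-U^{*})\ket{0^n}\ket{0^m}\|\le\eta$. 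A final triangle inequality gives $\|C-C'\|\leq \|C-\tilde{C}\|+\|\tilde{C}-C'\|\leq\epsilon+\eta$.

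The main obstacle, such as it is, is exactly this adjoint manoeuvre: naively, $U^{-1}$ is applied inside the circuit to the state $\ket{\psi}\ket{0^m}$, whose ancilla register happens to be zero but whose primary register is \emph{not} the all-zeros state that the approximation hypothesis explicitly controls. Taking the bra-side view converts the perturbation into one acting on $\ket{0^n}\ket{0^m}$, which is exactly the case covered by the definition of $\eta$-preparation. Everything else is a direct re-use of \cref{AngleApproximation}, including the identical Chernoff bound and sample complexity.
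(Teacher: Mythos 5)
Your proposal is correct and follows essentially the same route as the paper: run the circuit of \cref{AngleApproximation} with the approximate preparer $U$, observe that it estimates the perturbed inner product, and bound the perturbation by $\eta$ via Cauchy--Schwarz applied to $U\ket{0^n}\ket{0^m}-\ket{\phi}\ket{0^m}$, finishing with a triangle inequality. Your explicit adjoint step just spells out a conjugation the paper's proof leaves implicit.
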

\begin{proof}
    Running \cref{AngleApproximation}, with probability at least \(1 - \gamma\), we will get a \(C\) so that
    \[|C - (\bra{\psi}\bra{0^m})U\ket{0^n}\ket{0^m}| \leq \epsilon.\]
    Then
    \begin{align*}
        |C - \bra{\psi}\ket{\phi}| = & |C - (\bra{\psi}\bra{0^m})U\ket{0^n}\ket{0^m} + (\bra{\psi}\bra{0^m})U\ket{0^n}\ket{0^m} - \bra{\psi}\ket{\phi}| \\
        \leq & |C - (\bra{\psi}\bra{0^m})U\ket{0^n}\ket{0^m}| + |(\bra{\psi}\bra{0^m})U\ket{0^n}\ket{0^m} - (\bra{\psi}\bra{0^m})\ket{\phi}\ket{0^m}| \\
        \leq & \epsilon + |U\ket{0^n}\ket{0^m} - \ket{\phi}\ket{0^m}| \\
        \leq & \epsilon + \eta .
    \end{align*}
\end{proof}

After we have an approximate basis, the basic idea is to move the components of the input state in that basis into ancilla bits, do the local transformation on those components exactly and then move it back.

To do this, we need to convert our approximations of the components in a basis to a unitary on that basis.
\begin{restatable}[Gram-Schmidt Is Stable]{lemma}{GramSchmidtIsStab}
\label{GramSchmidtIsStab}
    Let $(v_i)_{i \in [k]}$ be orthonormal vectors. Let $\epsilon \in (0, \frac{1}{64 k^2})$. Suppose $(w_i)_{i \in [k]}$ are vectors such that for each $i \in [k]$, $\|v_i - w_i\| \leq \epsilon$. Then running Gram Schmidt on $(w_i)_{i \in [k]}$ produces orthonormal \((u_i)_{i \in [k]}\), where for each $i \in [k]$, 
    \[\|v_i - u_i\| \leq (64k + 1) \epsilon.\]
\end{restatable}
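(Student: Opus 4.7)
The plan is to prove $\|u_i - v_i\| \leq (64k+1)\epsilon$ by induction on $i$. Let $e_i = w_i - v_i$ so that $\|e_i\| \leq \epsilon$, and recall that Gram--Schmidt produces $\tilde u_i = w_i - \sum_{j<i}\langle u_j, w_i\rangle u_j$ followed by $u_i = \tilde u_i/\|\tilde u_i\|$. The base case $i=1$ is immediate: $u_1 = w_1/\|w_1\|$ and $|\,\|w_1\|-1\,|\leq\epsilon$, so $\|u_1-v_1\|\leq 2\epsilon$.

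For the inductive step I would use the decomposition
\[
\tilde u_i - v_i = (I - P_{i-1})\,e_i - P_{i-1} v_i,
\]
where $P_{i-1}=\sum_{j<i}u_j u_j^*$ is the projector onto $\mathrm{span}(u_1,\ldots,u_{i-1})$. The first summand has norm at most $\epsilon$. The main obstacle is bounding $\|P_{i-1}v_i\|$. The naive approach writes $|\langle u_j, v_i\rangle|=|\langle u_j-v_j,v_i\rangle|\leq\|u_j-v_j\|$ and uses orthonormality of the $u_j$ to get $\|P_{i-1}v_i\|^2\leq\sum_{j<i}\|u_j-v_j\|^2$; substituting the inductive hypothesis yields the recursion $\|u_i-v_i\|\leq 2\epsilon+2\sqrt{\sum_{j<i}\|u_j-v_j\|^2}$, which unrolls geometrically (roughly like $3^i$) and fails to give the claimed linear bound.

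To avoid this, I would exploit the structural fact that Gram--Schmidt forces $u_j\in\mathrm{span}(w_1,\ldots,w_j)$. Writing $u_j=\sum_{l\leq j}c_l w_l=\sum_l c_l v_l+\sum_l c_l e_l$ and using $\langle v_l, v_i\rangle=0$ for $l<i$, I get that for $j<i$,
\[
\langle u_j, v_i\rangle=\sum_{l\leq j}c_l\langle e_l,v_i\rangle.
\]
A short computation shows the Gram matrix $W^*W$ is within $O(k\epsilon)$ of the identity in operator norm, and the hypothesis $\epsilon<1/(64k^2)$ then keeps $\sum_l|c_l|^2\leq 2$. Combining this with Cauchy--Schwarz and $\sum_{l\leq j}|\langle e_l,v_i\rangle|^2\leq j\epsilon^2$ gives $|\langle u_j,v_i\rangle|\leq\sqrt{2j}\,\epsilon$, and then summing via Pythagoras yields $\|P_{i-1}v_i\|\leq O(k\epsilon)$. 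Putting this together with the $\epsilon$ term and observing that normalization at most doubles the error (since $|\,\|\tilde u_i\|-1\,|\leq\|\tilde u_i-v_i\|$) gives $\|u_i-v_i\|=O(k\epsilon)$; tracking the constants carefully yields the stated $(64k+1)\epsilon$ bound. The real insight is that the span structure of Gram--Schmidt is richer than the pointwise closeness $\|u_j-v_j\|\leq\delta_j$, and this richness prevents the error from compounding multiplicatively across iterations.
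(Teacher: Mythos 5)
Your proposal is correct, and it takes a genuinely different route from the paper. The paper works with the classical unnormalized Gram--Schmidt iterates, writing $w_i^j = w_i + \alpha_i^j\phi_i^j$ with $\phi_i^j$ a unit vector in $\mathrm{span}(w_1,\dots,w_{j-1})$, and runs a two-level induction showing the drift coefficient grows only additively, $\alpha_i^j \leq 32j\epsilon$, together with $\|w_i^j\|\geq 1/2$; normalizing at the end gives $(64k+1)\epsilon$. You instead use the projector identity $\tilde u_i - v_i = (I-P_{i-1})e_i - P_{i-1}v_i$ and control $\|P_{i-1}v_i\|$ in one shot: expanding $u_j=\sum_{l\leq j}c_lw_l$, noting $\langle u_j,v_i\rangle=\sum_l c_l\langle e_l,v_i\rangle$, and bounding $\|c\|^2\leq 2$ via the Gram matrix being within $3k\epsilon$ of the identity (which the hypothesis $\epsilon<1/(64k^2)$ comfortably guarantees). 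Both proofs ultimately rest on the same structural fact --- the orthogonalized vectors lie in $\mathrm{span}(w_1,\dots,w_j)$ and the cross inner products $\langle w_l,w_i\rangle$ are $O(\epsilon)$ --- and indeed the paper's bound $\sum_l\|\alpha_l\|^2\leq 9/8$ on the expansion of $\phi_j^j$ is the same conditioning estimate you get from the Gram matrix. But your packaging is cleaner: the bound $\|P_{i-1}v_i\|\leq\sqrt{\sum_{j<i}2j}\,\epsilon\leq k\epsilon$ does not actually invoke the inductive hypothesis at all, which is exactly why no compounding occurs, and it yields the sharper constant $\|u_i-v_i\|\leq 2(k+1)\epsilon$, well inside the stated $(64k+1)\epsilon$. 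The one point worth making explicit in a write-up is that $\|\tilde u_i\|>0$ (so normalization is defined), which follows immediately from $\|\tilde u_i - v_i\|\leq (k+1)\epsilon<1$.
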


I provide the proof of \cref{GramSchmidtIsStab} in the appendix. A direct corollary of this is that we can extrapolate approximations of unitaries.
\begin{corollary}[Extrapolate Unitary]
\label{ExtrapolateUnitary}
     Let $(v_i)_{i \in [k]}$ be orthonormal vectors in $m$ dimensions. Let $\epsilon \in (0, \frac{1}{64 k^2})$. Suppose $(w_i)_{i \in [k]}$ are vectors in $m$ dimensions such that for each $i \in [k]$, $\|v_i - w_i\| \leq \epsilon$. Then there is a \(\poly(m)\) time algorithm that constructs an \(m \times m\) unitary matrix from $(w_i)_{i \in [k]}$ who's first $k$ columns are \((u_i)_{i \in [k]}\), where for each $i \in [k]$, $\|v_i - u_i\| \leq (64k + 1) \epsilon$.
\end{corollary}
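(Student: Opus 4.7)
The plan is to invoke \cref{GramSchmidtIsStab} to obtain the first $k$ columns, and then extend those to a full orthonormal basis of $\C^m$ by standard linear algebra. The corollary is essentially just a packaging of the stability lemma into a full unitary, so the real content is already done.

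First, I apply Gram-Schmidt to $(w_i)_{i \in [k]}$ to produce orthonormal vectors $(u_i)_{i \in [k]}$. By \cref{GramSchmidtIsStab}, since $\epsilon < \frac{1}{64 k^2}$ and $\|v_i - w_i\| \leq \epsilon$, the output satisfies $\|v_i - u_i\| \leq (64k + 1)\epsilon$ for every $i \in [k]$. This is exactly the per-column bound required by the corollary, so no further approximation argument is needed.

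Second, I complete $(u_i)_{i \in [k]}$ to an orthonormal basis of $\C^m$. A clean deterministic way is to build the $m \times (k + m)$ matrix whose first $k$ columns are the $u_i$ and whose remaining $m$ columns are the standard basis vectors $e_1, \ldots, e_m$, and then run Gram-Schmidt on this ordered list, discarding any column whose residual after projection has norm $0$ (numerically, below a tiny threshold). Since the $e_j$'s already span $\C^m$, this pass will discard exactly $k$ of them and emit $m - k$ new orthonormal vectors $u_{k+1}, \ldots, u_m$ orthogonal to the first $k$. Assembling $u_1, \ldots, u_m$ as the columns of an $m \times m$ matrix yields a unitary whose first $k$ columns are the approximations we wanted. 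Runtime of Gram-Schmidt on $O(m)$ vectors in $\C^m$ is $O(m^3)$, which is $\poly(m)$ as required.

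There is no real obstacle here: the only thing one might worry about is that the completion step produces exactly $m - k$ additional vectors, but that follows because the combined column list spans $\C^m$ (the standard basis alone does). The columns outside the first $k$ play no role in the bound, so their construction only needs to be orthonormal, not close to anything in particular.
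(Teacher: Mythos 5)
Your proposal is correct and matches the paper's intent: the paper presents this as a direct corollary of \cref{GramSchmidtIsStab} with no written proof, and the argument it has in mind is exactly yours — apply the stability lemma to get the first $k$ columns with the stated bound, then complete to an orthonormal basis of $\C^m$ in $\poly(m)$ time. Nothing is missing.
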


It is a standard result that quantum circuits can implement any given unitary. I omit this proof.
\begin{lemma}[Apply Arbitrary Unitary]
\label{ApplyArbitraryUnitary}
    For any unitary $U$ on $\log_2(m)$ qubits, there is a quantum circuit of size \(\poly(m)\) implementing $U$ computable in polynomial time.
\end{lemma}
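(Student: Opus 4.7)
The plan is to combine the textbook two-level decomposition of unitaries with a direct implementation of each two-level factor using the gate set from \cref{Notation}, so that no approximation is ever needed.

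First I would invoke the standard fact (Nielsen \& Chuang, Ch.~4) that any $m \times m$ unitary $U$ admits a $\poly(m)$-time classical factorization $U = W_1 W_2 \cdots W_N$ with $N = O(m^2)$, where each $W_j$ is a two-level unitary: it acts nontrivially only on a pair of computational basis states $\ket{a_j}, \ket{b_j}$ and as the identity on every other basis state. This is produced by the usual Givens-style zeroing of columns below the diagonal, is entirely classical, and runs in time $\poly(m)$.

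Then I would implement each $W_j$ with $O(1)$ gates from our set. Writing $n = \lceil \log_2 m \rceil$, pick a permutation $\pi$ of the $n$-qubit basis states that sends $a_j \mapsto 0^n$ and $b_j \mapsto 0^{n-1}1$. Such a $\pi$ is a product of at most two arbitrary-basis swaps, each of which is a single gate in our set (item 2 of the gate list in \cref{Notation}). Conjugation by $\pi$ reduces $W_j$ to a $2 \times 2$ block $W'_j$ acting only on $\ket{0^n}$ and $\ket{0^{n-1}1}$; that block is exactly one single-qubit unitary on the last qubit controlled on the top $n-1$ qubits being $\ket{0}$, i.e.\ a single $M_0$-controlled single-qubit gate. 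Undoing $\pi$ costs at most two more swaps, giving $O(1)$ gates per $W_j$ and $O(m^2)$ gates total, exactly implementing $U$ since our gate set already contains every single-qubit unitary.

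The main (and still minor) obstacle I expect is bookkeeping in the degenerate cases where $a_j$ or $b_j$ already lies in $\{0^n, 0^{n-1}1\}$, so that the two basis swaps implementing $\pi$ collide with one another or are unnecessary; each such case is handled by dropping one or both swaps, or by replacing $\pi$ with the single transposition $0^n \leftrightarrow 0^{n-1}1$, still realizing $W_j$ in $O(1)$ gates. This is precisely why the author flags the result as standard and omits the proof.
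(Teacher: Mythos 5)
The paper deliberately omits a proof of this lemma, citing it as a standard result, so there is nothing to compare against; your argument is exactly the standard two-level (Givens) decomposition from Nielsen \& Chuang, correctly adapted to the paper's gate set (basis-state swaps plus $M_0$-controlled single-qubit gates make each two-level factor an $O(1)$-gate subcircuit). Your proposal is correct and supplies precisely the proof the author intended the reader to recall.
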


Given oracles to prepare an orthogonal basis for \(\ket{\psi_i}\) and the components of our \(\ket{\psi_i}\) along them, we can compute the unitary from our main theorem using only a constant number of oracle calls.

\begin{lemma}[Deferred Rotation]
\label{DefferedRotation}
    Let $k$ and $l$ be natural numbers. Let \((\ket{\phi_i})_{i \in [l]}\) be $l$ orthonormal vectors all orthogonal to \((\ket{i-1})_{i \in [k]}\). And for each \(i \in [l]\), let \(V_i\) be a unitary that prepares $\ket{\phi_i}$.
    
    Consider some orthonormal \((\ket{\psi_i})_{i \in [k]}\). Let \(m = k + l\). Suppose that for each \(i \in [k]\), we know the $m$ dimensional vector \(v^i\) so that \(\ket{\psi_i} = \sum_{j \in [k]}v^i_j \ket{j-1} + \sum_{j \in [l]}v^i_{k-1 + j} \ket{\phi_j}\).
    
    Then in polynomial time, we can construct a circuit with size \(\poly(m)\) making a constant number of oracle calls to each \(V_i\) $0$-approximating a unitary, $U$, on \(n\) bits such that for all \(i \in [k]\), \(U\ket{i-1} = \ket{\psi_i}\) and for all \(\ket{\theta}\) orthogonal to all \((\ket{i})_{i \in [n]}, (\ket{\phi_i})_{i \in [l]}\), we have \(U\ket{\theta} = \ket{\theta}\).
\end{lemma}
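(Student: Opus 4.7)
The plan is to reduce the problem to applying a purely classical $m \times m$ unitary matrix on an ancilla register, conjugated by a ``loading'' operation that moves the relevant subspace information from the primary register onto the ancilla. Since the $\ket{\psi_i}$ are orthonormal and $(\ket{j-1})_{j \in [k]} \cup (\ket{\phi_j})_{j \in [l]}$ is an orthonormal basis for its ambient subspace $S$, the coefficient vectors $v^1, \dots, v^k$ form an orthonormal family in $\mathbb{C}^m$. I would first extend them in polynomial time (either by applying \cref{ExtrapolateUnitary} with $\epsilon = 0$, or by Gram--Schmidt against standard basis vectors) to an $m \times m$ unitary $W$, then promote $W$ to an ancilla-side unitary $\tilde W$ that fixes $\ket{0}$ and acts as $W$ on the remaining $m$ basis states.

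Next I would build a loading unitary $L$ on the primary register plus an ancilla register $A$ of $\lceil \log_2(m+1) \rceil$ qubits whose intended action is $\ket{i-1}\ket{0}_A \mapsto \ket{0^n}\ket{i}_A$ for $i \in [k]$, $\ket{\phi_i}\ket{0}_A \mapsto \ket{0^n}\ket{k+i}_A$ for $i \in [l]$, and the identity on $\ket{\theta}\ket{0}_A$ whenever $\ket{\theta} \perp S$. I would assemble $L$ in three stages: first, for each $i \in [k]$, apply $P_i$ to $A$ controlled on the primary being $\ket{i-1}$; second, for each $i \in [l]$, apply $P_{k+i}$ to $A$ controlled on the primary being $\ket{\phi_i}$, using $M_{\phi_i}$ from \cref{CMNot}; third, for each $j \in [k]$ swap the primary register between $\ket{j-1}$ and $\ket{0^n}$ controlled on $A = \ket{j}$, and for each $j \in [l]$ apply $V_j^{-1}$ to the primary controlled on $A = \ket{k+j}$. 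This uses each oracle $V_i$ a constant number of times and has size $\poly(m)$.

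I would then take the final circuit to be $C = L^{-1}\tilde W L$. Verifying $C\ket{i-1}\ket{0}_A = \ket{\psi_i}\ket{0}_A$ is a direct calculation: $L$ sends $\ket{i-1}\ket{0}_A$ to $\ket{0^n}\ket{i}_A$; $\tilde W$ rewrites $\ket{i}_A$ as $\sum_j v^i_j \ket{j}_A$; and $L^{-1}$ returns each $\ket{0^n}\ket{j}_A$ to the matching $\ket{j-1}$ or $\ket{\phi_{j-k}}$ with $A = \ket{0}$. The identity action on $\ket{\theta} \perp S$ is even easier, since $L$ and $\tilde W$ both fix $\ket{\theta}\ket{0}_A$. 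By linearity, the $\ket{\phi_j}$-components of any input are also routed back into $S\otimes\ket{0}_A$, so the $A$ register and the $V_j$-ancillas are returned to $\ket{0}$ on every input, giving a genuine $0$-approximation to an $n$-qubit unitary $U$ with the required action.

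The main subtlety is checking step two of $L$: $M_{\phi_i}$ must fire exactly on the $\ket{\phi_i}$ component of the primary and not on $\ket{\phi_j}$ with $j \ne i$ or on vectors in $S^\perp$. This reduces to the identity $\bra{0^n}\bra{0^p} V_i^{-1}\ket{\chi}\ket{0^p} = \bra{\phi_i}\ket{\chi}$, which vanishes whenever $\ket{\chi} \perp \ket{\phi_i}$; hence the inner $M_0$ inside $M_{\phi_i}$ touches only the $\ket{\phi_i}$ amplitude and cleanly returns the $V_i$-ancillas to $\ket{0^p}$ on every orthogonal component. Once this is in hand, the remainder is routine bookkeeping of the ancilla wires through $L$, $\tilde W$, and $L^{-1}$.
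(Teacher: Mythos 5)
Your proposal is correct and follows essentially the same route as the paper: load a label for the subspace $(\ket{j-1})_{j\in[k]}\cup(\ket{\phi_j})_{j\in[l]}$ into a $\lceil\log_2(m+1)\rceil$-qubit ancilla, uncompute the primary register to $\ket{0^n}$ controlled on that label, apply the padded $m\times m$ unitary (identity on $\ket{0}$) to the ancilla, and invert the loading; your $L$ is exactly the paper's $A_2A_1$ and your $\tilde W$ is its $U''$. The verification on the three classes of basis vectors, including the observation that $M_{\phi_i}$ fires only on the $\ket{\phi_i}$ component, matches the paper's argument.
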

\begin{proof}
    We will use \(p = \lceil \log_2(m + 1) \rceil\) auxiliary bits to hold the \((\ket{i-1})_{i \in [k]}\), \((\ket{\phi_i})_{i \in [l]}\) subspace. Then we will perform the rotations on this subspace, then we will move it back.
    
    First we construct the circuit, \(A_1\), which puts the components of the state in the first $k$ computational basis states into the ancilla state, and similarly for the states \((\ket{\phi_i})_{i \in [l]}\). So \(A_1\) is the following circuit:
    $$\Qcircuit @C=1em @R=.7em {
    	\lstick{\ket{a}}  & \gate{M_0} \qwx[1]  & \gate{M_1} \qwx[1] & \qwelipse & \gate{M_{k-1}} \qwx[1] & \qwelipse & \gate{M_{\phi_1}} \qwx[1]  & \gate{M_{\phi_2}} \qwx[1] & \qwelipse & \gate{M_{\phi_k}} \qwx[1]& \qw & \\
    	\lstick{\ket{0^p}} & \gate{P_1}  & \gate{P_2} & \qwelipse & \gate{P_k} & \qwelipse & \gate{P_{k+1}} \qwx[-1] & \gate{P_{k+2}} \qwx[-1] & \qwelipse & \gate{P_{k+l}} \qwx[-1]& \qw & \\
    } .$$
    
    Then controlled on the ancilla state, we move our subspace in the primary register all back to the 0 state. So $A_2$ is the following circuit:
    $$\Qcircuit @C=1em @R=.7em {
    	\lstick{\ket{a}} & \gate{P_0}  & \gate{P_1} & \qwelipse & \gate{P_{k-1}} & \qwelipse & \gate{V_1} & \gate{V_2} & \qwelipse & \gate{V_l} & \qw & \\
    	\lstick{\ket{b}} & \gate{M_1} \qwx[-1]  & \gate{M_2} \qwx[-1] & \qwelipse & \gate{M_k} \qwx[-1] & \qwelipse & \gate{M_{k+1}} \qwx[-1]  & \gate{M_{k+2}} \qwx[-1] & \qwelipse & \gate{M_{k+l}} \qwx[-1]& \qw & \\
    } .$$
    
    Then using \cref{ExtrapolateUnitary}, we can get a $m \times m$ unitary, \(U'\), who's first $k$ column vectors are \((v_i)_{i \in [k]}\). Of course by padding $U'$, we can get unitary $U''$ on $p$ qubits. Particularly, pad it so that $U''$ is identity on everything but states $\ket{1}, \ldots, \ket{m}$. Notably, it is identity on the $\ket{0}$ state.
    
    Then using \cref{ApplyArbitraryUnitary}, there is a circuit with size \(O(2^p) = O(m)\) implementing $U''$, call it $A_3$. Then $A_3$ does the rotation on the ancilla register.
    
    Finally, we just run \(A_2^{-1}\) and \(A_1^{-1}\) to translate our results back into our main state. This leaves us with the final circuit for $U$:
    
    $$\Qcircuit @C=1em @R=.7em {
    	\lstick{\ket{a}} & \multigate{1}{A_1} & \multigate{1}{A_2} & \qw \qwx[1] & \multigate{1}{A_2^{-1}} & \multigate{1}{A_1^{-1}} & \qw & \\
    	\lstick{\ket{0}} & \ghost{A_1} & \ghost{A_2} & \gate{A_3} & \ghost{A_2^{-1}} & \ghost{A_1^{-1}} & \qw & \\
    } .$$
    
    Now to see if this approximates an appropriate $U$, we just need to see if it gives the appropriate results on some basis.
    \begin{itemize}
        \item \(i+1 \in [k]\), \(\ket{i}\). This gives
        \begin{align*}
            A_1 \ket{i} \ket{0} = & \ket{i} \ket{i+1} \\
            A_2 A_1 \ket{i} \ket{0} = &  \ket{0} \ket{i+1} \\
            (I^{\otimes n} \otimes U'') A_2 A_1 \ket{i} \ket{0} = &  \ket{0} \sum_{j \in [m]} v^i_j \ket{j} \\
            A_2^{-1} (I^{\otimes n} \otimes U'') A_2 A_1 \ket{i} \ket{0} = & \sum_{j \in [k]}v^i_j \ket{j-1} \ket{j} + \sum_{j \in [l]}v^i_{k + j} \ket{\phi_j} \ket{k+j} \\
            A_1^{-1} A_2^{-1} (I^{\otimes n} \otimes U'') A_2 A_1 \ket{i} \ket{0} = & \sum_{j \in [k]}v^i_j \ket{j-1} \ket{0} + \sum_{j \in [l]}v^i_{k + j} \ket{\phi_j} \ket{0} \\
            = & \ket{\psi_i} \ket{0}. \\
        \end{align*}
        
        As needed for $U$, takes \(\ket{i}\) to \(\ket{\psi_i}\) and leaves auxiliary bit as $0$.
        
        \item \(i \in [k]\), \(\ket{\phi_{i}}\). For notation, let $U''^i_j$ be the $j$th index of the $i$th column of $U''$. This gives
        \begin{align*}
            A_1 \ket{\phi_i} \ket{0} = & \ket{\phi_i} \ket{k+i} \\
            A_2 A_1 \ket{\phi_i} \ket{0} = &  \ket{0} \ket{k+i} \\
            (I^{\otimes n} \otimes U'') A_2 A_1 \ket{\phi_i} \ket{0} = &  \ket{0} \sum_{j \in [m]} U''^i_j \ket{j} \\
            A_2^{-1} (I^{\otimes n} \otimes U'') A_2 A_1 \ket{\phi_i} \ket{0} = & \sum_{j \in [k]} U''^{k+i}_j \ket{j-1} \ket{j} + \sum_{j \in [l]} U''^{k+i}_{k + j} \ket{\phi_j} \ket{k+j} \\
            A_1^{-1} A_2^{-1} (I^{\otimes n} \otimes U'') A_2 A_1 \ket{\phi_i} \ket{0} = & \sum_{j \in [k]} U''^{k+i}_j \ket{j-1} \ket{0} + \sum_{j \in [l]} U''^{k+i}_{k + j} \ket{\phi_j} \ket{0} \\
            = & \left(\sum_{j \in [k]} U''^{k+i}_j \ket{j} + \sum_{j \in [l]} U''^{k+i}_{k + j} \ket{\phi_j}\right) \ket{0}. \\
        \end{align*}
        
        This leaves the auxiliary bits at $0$, as required.
        
        \item \(\ket{\theta}\) for any state orthogonal to all \((\ket{i})_{i \in [k]}\) and all \((\ket{\phi_i})_{i \in [l]}\). This gives
        \begin{align*}
            A_1 \ket{\theta} \ket{0} = & \ket{\theta} \ket{0} \\
            A_2 A_1 \ket{\theta} \ket{0} = &  \ket{\theta} \ket{0} \\
            (I^{\otimes n} \otimes U'') A_2 A_1 \ket{\theta} \ket{0} = &  \ket{\theta} \ket{0} \\
            A_2^{-1} (I^{\otimes n} \otimes U'') A_2 A_1 \ket{\theta} \ket{0} = & \ket{\theta}\ket{0} \\
            A_1^{-1} A_2^{-1} (I^{\otimes n} \otimes U'') A_2 A_1 \ket{\theta} \ket{0} = & \ket{\theta}\ket{0}. \\
        \end{align*}
        
        Which leaves the state unchanged.
    \end{itemize}
    
    So the circuit does the correct thing on every basis vector, so it indeed implements the appropriate operation.
\end{proof}

\section{How To Orthogonalize}

Next we prove the existence of our subroutine \cref{Orthogonalization}, described as $A$ in \cref{proofIdea}. This is the subroutine that takes an oracle preparing an input to a circuit that approximately prepares the orthogonal component of that state. After that, its a small step to make it prepare the orthogonal component.

\Orthogonalization*
Importantly here, the circuit ONLY uses the relationship between \(\ket{\psi}\) and the \(\ket{\phi_i}\) in $R$. So if we give this a subroutine an \(\eta\) approximation of \(R\), that only contributes \(\eta  O \left(\frac{ \ln(1/\epsilon)}{\delta^2}\right)\) error.
\begin{proof}
    First, assume $\epsilon < 1/2$. If not, solve for $\epsilon = 1/4$, and the same result works for $\epsilon$. If $k = 0$, $V$ already prepares $\ket{\psi}$, which is orthogonal to every element of $\emptyset$. So take $k \geq 1$.

    The algorithm is the same as in \cref{proofIdea} . Start with \(\ket{0}\), apply \(V\). Controlled on it being in one of the \(k\) states, set an ancilla register. Then conditioned on the ancilla register being set, rotate the state back to \(\ket{0}\). Repeat until all but a small amplitude has the final ancilla bits set to 0. Then rotate the ancilla bits back to 0.
    
    Let $m$ be the number of ancilla bits used by $V$.
    
    Then the circuit first does:
    
    $$\Qcircuit @C=1em @R=.7em {
    	\lstick{\ket{0^n}} & \multigate{1}{V} & \gate{M_{\ket{\phi_1}}} \qwx[2] & \gate{M_{\ket{\phi_2}}} \qwx[2] & \qwelipse & \gate{M_{\ket{\phi_k}}} \qwx[2] & \gate{U_1} & \gate{U_2} & \qwelipse & \gate{U_k} & \qw & \\
    	\lstick{\ket{0^m}} & \ghost{V} & \qw & \qw & \qwelipse & \qw & \qw & \qw & \qwelipse & \qw & \qw & \\
    	\lstick{\ket{0}} & \qw & \gate{P_1}  & \gate{P_2} & \qwelipse &  \gate{P_k} &  \gate{M_1} \qwx[-2] & \gate{M_2} \qwx[-2] & \qwelipse & \gate{M_k} \qwx[-2]& \qw  &\\
    	\lstick{\vdots} & \pureghost{\vdots} & \\
    }.$$
    Call this \(A_1\).
    
    Let \(l = 4 \frac{\ln(1/\epsilon)}{\delta^2}\). Then for \(j \in \left[l - 1 \right]\) (going from $1$ to $l - 1$), each with its own k state register, we do
    $$\Qcircuit @C=1em @R=.7em {
    	\lstick{\ket{0^n}} & \multigate{1}{V} & \gate{M_{\ket{\phi_1}}} \qwx[4] & \gate{M_{\ket{\phi_2}}} \qwx[4] & \qwelipse & \gate{M_{\ket{\phi_k}}} \qwx[4] & \gate{U_1} & \gate{U_2} & \qwelipse & \gate{U_k} & \qw & \\
    	\lstick{\ket{0^m}} & \ghost{V} & \qw & \qw & \qwelipse & \qw & \qw & \qw & \qwelipse & \qw & \qw & \\
    	\lstick{\vdots} & \pureghost{\vdots} & \\
    	\lstick{\ket{b_{j-1}}} & \gate{OR} \qwx[-2] & \qw  & \qw & \qw & \qw & \qw & \qw & \qwelipse & \qw & \qw & \qw &\\
    	\lstick{\ket{0}} & \qw & \gate{P_1}  & \gate{P_2} & \qwelipse &  \gate{P_k} &  \gate{M_1} \qwx[-4] & \gate{M_2} \qwx[-4] & \qwelipse & \gate{M_k} \qwx[-4]& \qw  &\\
    	\lstick{\vdots} & \pureghost{\vdots} & \\
    }.$$
    Call this entire sequence \(A_2\).

    For \(j \in \left[l -1 \right]\) (from $l - 1$ to $1$), we reset the auxiliary qubits.
    $$\Qcircuit @C=1em @R=.7em {
    	\lstick{\vdots} & \pureghost{\vdots} & \\
    	\lstick{\ket{b_{j-1}}} & \gate{OR} \qwx[1] &  \qw &\\
    	\lstick{\ket{b_j}} &  \gate{R^{-1}}  & \qw  &\\
    	\lstick{\vdots} & \pureghost{\vdots} & \\
    }.$$
    Call this entire sequence \(A_3\).
    
    Finally, we just have to clean up the first set of auxiliary bits:
    $$\Qcircuit @C=1em @R=.7em {
    	\lstick{\vdots} & \pureghost{\vdots} & \\
    	\lstick{\ket{b_{0}}} & \gate{R^{-1}} &\qw &\\
    	\lstick{\vdots} & \pureghost{\vdots} & \\
    }.$$
    Call this \(A_4\).
    
    Then \(A = A_4 A_3 A_2 A_1\), and \(p = m + O(\log(k + 1) ) l\).  First, I will show that \(A_2 A_1 \ket{0^n}\ket{0^p}\) approximately prepares some state, \(\ket{a}\). Then I will show that \(A_4 A_3 \ket{a} = \ket{\phi} \ket{0^p}\). This implies $A$ approximately prepares \(\ket{\phi}\ket{0^p}\).
    
    Let \(\alpha = \sqrt{1 - \beta}\) and
    \[\ket{\theta} = \sum_{i \in [k]}  \frac{\bra{\phi_i}\ket{\psi}}{\alpha} \ket{i}\]
    so that $R\ket{0} = \beta \ket{0} + \alpha \ket{\theta}$.
    
    Let
    \[\ket{b} = \beta \sum_{i = 0}^{l-1} \alpha^i \ket{\phi}\ket{0^m} \ket{\theta}^{i} \ket{0}^{l-i} + \alpha^{l} \ket{0^n}\ket{0^m}\ket{\theta}^{l} \]
    and
    \[\ket{a} = \ket{\phi}\ket{0^m} \left(\beta \sum_{i = 0}^{l-1} \alpha^i \ket{\theta}^{i} \ket{0}^{l-i}  + \alpha^{l} \ket{\theta}^{l} \right) . \]
    Then see that:
    \begin{align*}
        \|\ket{a} - \ket{b}\| \leq & 2 \alpha^l \\
        = & 2 (\alpha^2)^{2\frac{\ln(1/\epsilon)}{\delta^2}} \\
        = & 2 (1 - \beta^2)^{2\frac{\ln(1/\epsilon)}{\delta^2}} \\
        < & 2 (\epsilon)^{2\frac{\beta^2}{\delta^2}} \\
        \leq & 2 (\epsilon)^{2} \\
        \leq & \epsilon.
    \end{align*}
    This uses that $(1 - x) < e^{-x}$ for positive $x$.
    
    I claim \(A_2 A_1 \ket{0^n}\ket{0^p} = \ket{b}\) and \(A_4 A_3 \ket{a} = \ket{\phi} \ket{0^p}\). Since distances are preserved under unitary operations and \(\ket{b}\) is within \(\epsilon\) of \(\ket{a}\), this would prove the claim.
    
    See that
    \[A_1 \ket{0^n}\ket{0^p} = \beta\ket{\phi}\ket{0^m} \ket{0}^l + \alpha \ket{0^n}\ket{0^m}\ket{\theta}\ket{0}^{l-1}.\]
    Then one can show by induction after the $j$th step of \(A_2\), the state is
    \[\beta \sum_{i = 0}^{j} \alpha^i \ket{\phi}\ket{0^m} \ket{\theta}^{i} \ket{0}^{l-i} + \alpha^{j+1} \ket{0^n}\ket{0^m}\ket{\theta}^{j+1}\ket{0}^{l-j-1},\]
    which then at the \(j=l-1\) step gives \(A_2 A_1 \ket{0^n}\ket{0^p} = \ket{b}\).
    
    To prove \(A_4 A_3 \ket{a} = \ket{\phi} \ket{0^p}\) follows by basically the same induction. See that at the jth step of \(A_3\), the state is 
    \begin{align*}
        &\ket{\phi}\ket{0^m} \left(\beta \sum_{i = 0}^{j} \alpha^i \ket{\theta}^{i} \ket{0}^{l-i} + \alpha^{j+1} \ket{\theta}^{j+1}\ket{0}^{l-j-1}\right) \\
        = & \ket{\phi}\ket{0^m} \left(\beta \left(\sum_{i = 0}^{j-1} \alpha^i \ket{\theta}^{i} \ket{0}^{l-i} + \alpha^j \beta\ket{\theta}^{j}\ket{0}^{l-j} \right) + \alpha^{j+1} \ket{\theta}^{j+1} \ket{0}^{l-j-1} \right). \\
        = & \ket{\phi}\ket{0^m} \left(\beta \sum_{i = 0}^{j-1} \alpha^i \ket{\theta}^{i} \ket{0}^{l-i} + \alpha^j \ket{\theta}^{j} \left( \beta\ket{0} + \alpha\ket{\theta} \right)\ket{0}^{l-j-1} \right). \\
    \end{align*}
    So when we apply the conditional rotation essentially undoes one rotation to give
    \[\ket{\phi}\ket{0^m} \left(\beta \sum_{i = 0}^{j-1} \alpha^i \ket{\theta}^{i} \ket{0}^{l-i} + \alpha^j\ket{\theta}^{j}\ket{0}^{l-j})\right).\]
    And similarly \(A_4\) does the final, non conditional step to convert
    \[\ket{\phi}\ket{0^m} \left(\beta \ket{0}^l + \alpha \ket{\theta}\ket{0}^{l-1}\right)\]
    into \(\ket{\phi}\ket{0^p}\). Thus \(A_4 A_3 \ket{a} = \ket{\phi} \ket{0^p}\).
    
    So then
    \begin{align*}
        \|A_4 A_3 A_2 A_1 \ket{0^n}\ket{0^p} - \ket{\phi}\ket{0^p}\| = & \| A_4 A_3 \ket{b} -  \ket{\phi}\ket{0^p}\| \\
        = & \| A_4 A_3 \ket{b} - A_4 A_3 \ket{a} + A_4 A_3 \ket{a} -   \ket{\phi}\ket{0^p}\| \\
        = & \| A_4 A_3 (\ket{b} - \ket{a})\| \\
        = & \|\ket{b} - \ket{a}\| \\
        \leq & \epsilon .\\
    \end{align*}
    Finally, observe we only call $V$, each \(U_i\) and $R$ $O(l)$ times. Thus proves the statement.
\end{proof}

\section{Computing an Approximate Orthonormal Basis}

Our main lemma, \cref{Orthogonalization}, together with \cref{DefferedRotation} is enough to prove a version of our main theorem, \cref{CleanUnitary}. Still, the proofs are quite involved because we iteratively use approximations to find approximations and we have to be careful to make sure the errors don't accumulate in an unmanageable way.

The idea is straightforward. Iteratively find \(\ket{\phi_i}\) as the component of \(\ket{\psi_i}\) orthogonal to all \((\ket{\psi_j})_{j < i}\) using \cref{Orthogonalization}. If the length of \(\ket{\psi_i}\) along \(\ket{\phi_i}\) is small, then don't calculate \(\ket{\phi_i}\). 

First, we will quantify the error in \cref{RobustAngleApproximation} and \cref{Orthogonalization} when we use approximate preparation of states instead of exact ones. We put these in the same theorem because the theorem statements are always used together and there is a lot of redundancy in their statements.
\begin{lemma}[Orthogonal Components With Approximate Inputs]
\label{ApproximateOrthogonalComponents}
    For natural $k$, let \((\ket{\phi_i})_{i \in [k]}\) be orthonormal states on $n$ qubit and for every \(i \in [k]\), \(U_i\) $\eta_i$-approximately prepares $\ket{\phi_i}$. Let \(\eta = \sum_{i \in [k]} \eta_i\).
    
    Suppose \(\ket{\psi}\) is an $n$ qubit state prepared by $V$. Let \(\ket{\phi} = \Phi((\ket{\phi_i})_{i \in [k]}, \ket{\psi})\). Suppose for \(\delta \geq 0\), \(\Delta((\ket{\phi_i})_{i \in [k]}, \ket{\psi}) \geq \delta\).
    
    Then for \(\epsilon_1, \gamma > 0\), there is a circuit with size \(O \left(k \frac{\ln(k/\gamma) }{\epsilon_1^2} \right)\) making oracle calls to the \(U_i\) and \(V\) that outputs a constant \(C\), and for every \(i \in [k]\), a constant \(A_i\) such that
    \begin{enumerate}
        \item For \(\ket{\theta} = \sum_{i \in k} A_i \ket{\phi_i} + C \ket{\phi}\),  \(\ket{\theta}\) is a unit vector.
        \item Define \(\epsilon_2 \geq 0\) so that if \(2(\eta + \epsilon_1) < \delta^2\), 
        \[\epsilon_2 = \min\{4\frac{\eta + \epsilon_1}{\delta}, 2\sqrt{(\eta + \epsilon_1)}\},\]
        and otherwise:
        \[\epsilon_2 = 2\sqrt{(\eta + \epsilon_1)}.\]
        \item We call an output nice if for all $i \in [k]$,
        \[\|A_i - \bra{\phi_i}\ket{\psi}\| \leq \eta_i + \frac{\epsilon_1}{k}.\]
        An output is nice with probability at least \(1 - \gamma\).
        \item We call an output good if
        \[\|\ket{\theta} - \ket{\psi}\| \leq \epsilon_2 .\]
        An output is good if it is nice.
    \end{enumerate}
    
    Additionally, given good output from the first circuit and for $\delta > 0$, for all \(\epsilon_3 > 0\), there is a circuit $A$ such that
    \begin{enumerate}
        \item $A$ has size \(O(k \frac{\ln(1/\epsilon_3)}{\delta^2})\).
        \item $A$ is computable in polynomial time in its size.
        \item $A$ makes at most \(O( \frac{\ln(1/\epsilon_3)}{\delta^2})\) oracle calls to \(V\).
        \item For each i, $A$ makes at most \(O( \frac{\ln(k/\epsilon_3)}{\delta^2})\) oracle calls to \(U_i\).
        \item $A$
        \[O \left(\epsilon_2 \frac{\ln(1/\epsilon_3)}{\delta^2}\right) + \epsilon_3\]
        approximately prepares $\ket{\phi}$.
    \end{enumerate} 
\end{lemma}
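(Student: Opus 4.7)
The plan splits naturally along the two halves of the statement: producing the estimates $(A_i, C)$, and turning a nice output into the orthogonalization circuit $A$.

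For the estimation, I would call \cref{RobustAngleApproximation} once for each $i \in [k]$ with target precision $\epsilon_1/k$ and failure probability $\gamma/k$, letting $A_i$ be the $i$th estimate; a union bound then gives niceness, $|A_i - \bra{\phi_i}\ket{\psi}| \le \eta_i + \epsilon_1/k$ for every $i$, with probability at least $1 - \gamma$. I then compute $C = \sqrt{\max(0,\, 1 - \sum_i |A_i|^2)}$ classically, which makes $\ket{\theta} = \sum_i A_i \ket{\phi_i} + C\ket{\phi}$ a unit vector by construction. To derive goodness from niceness, I expand
\[\|\ket{\theta} - \ket{\psi}\|^2 = \sum_i |A_i - \bra{\phi_i}\ket{\psi}|^2 + |C - \delta|^2.\]
The first sum is at most $(\eta + \epsilon_1)^2$ by adding the niceness bounds. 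For the second, factoring $|C^2 - \delta^2| = |C - \delta|(C + \delta)$ and using $|A_i|, |\bra{\phi_i}\ket{\psi}| \le 1$ together with niceness gives $|C^2 - \delta^2| \le 2(\eta + \epsilon_1)$. Since $C, \delta \ge 0$, one always has $|C - \delta| \le \sqrt{|C^2 - \delta^2|} \le \sqrt{2(\eta + \epsilon_1)}$; and when $2(\eta + \epsilon_1) < \delta^2$, the sharper bound $|C - \delta| \le 2(\eta + \epsilon_1)/\delta$ follows from $C + \delta \ge \delta$. Taking the minimum and absorbing constants reproduces the stated $\epsilon_2$.

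For the second half, given a nice output I would classically build an $\lceil \log_2(k+1) \rceil$-qubit unitary $\tilde R$ whose first column is $C\ket{0} + \sum_i A_i \ket{i}$, using \cref{ExtrapolateUnitary} to complete the approximate column to a unitary and \cref{ApplyArbitraryUnitary} to realize it as a circuit. By construction, $\|\tilde R \ket{0} - R \ket{0}\| = \|\ket{\theta} - \ket{\psi}\| \le \epsilon_2$. I then plug $V$, the approximate $U_i$, and $\tilde R$ into the circuit from \cref{Orthogonalization} with target precision $\epsilon_3$; the stated size, construction time, and oracle-call counts follow immediately from \cref{Orthogonalization}. For the error, I walk through the proof of \cref{Orthogonalization}: $A_1, A_2$ never invoke $R$, and each application of $R^{-1}$ inside $A_3, A_4$ ideally operates on a register in the state $R\ket{0}$, so substituting $\tilde R^{-1}$ costs at most $\|\tilde R \ket{0} - R\ket{0}\| \le \epsilon_2$ per call. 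Combining the $O(\ln(1/\epsilon_3)/\delta^2)$ such calls, the same number of calls to each approximate $U_i$ contributing $\eta_i$ apiece, the ambient $\epsilon_3$ from \cref{Orthogonalization}, and the bound $\eta \le 2\sqrt{\eta + \epsilon_1} \le \epsilon_2$, the Approximate Composition lemma collapses everything to the claimed $O(\epsilon_2 \ln(1/\epsilon_3)/\delta^2) + \epsilon_3$.

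The main obstacle is that \cref{Orthogonalization} specifies $R$ only through $R\ket{0}$, whereas the circuit uses $R$ and $R^{-1}$ (even under control) as full unitaries, so a naive substitution would seem to demand small $\|\tilde R - R\|_{\mathrm{op}}$ rather than just a small first-column error. The rescue comes from the inductive state description inside the proof of \cref{Orthogonalization}: every $R^{-1}$ call acts, in the exact analysis, on a register in the state $R\ket{0}$, so any unitary extension of the approximate first column suffices and each per-call error reduces to $\|\tilde R\ket{0} - R\ket{0}\|$. Pinning this substitution down term-by-term in the inductive state expression is the only piece of real bookkeeping; once it is in place, the algebraic bounds above close the proof.
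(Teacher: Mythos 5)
Your proposal takes essentially the same route as the paper: \cref{RobustAngleApproximation} once per index with a union bound for niceness, $C$ fixed by normalization, the good-from-nice bound split into the $A_i$ term and the $|C - \bra{\phi}\ket{\psi}|$ term, and then \cref{Orthogonalization} run with the approximate $R$ and the approximate $U_i$, with errors combined additively. Your factoring $|C - \beta| = |C^2 - \beta^2|/(C+\beta)$ is actually a cleaner derivation of both branches of $\epsilon_2$ than the paper's binomial-expansion argument, and your term-by-term justification of why only the first column of $R$ matters is exactly the point the paper merely asserts in the remark following \cref{Orthogonalization}.

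The one genuine gap is the claim that $C = \sqrt{\max(0,\, 1 - \sum_i |A_i|^2)}$ makes $\ket{\theta}$ a unit vector ``by construction.'' When $\sum_i |A_i|^2 > 1$ this gives $C = 0$ and $\|\ket{\theta}\| > 1$, so condition 1 of the lemma fails. The paper handles this case by normalizing the coefficient vector and then separately verifying that rescaling a vector of norm greater than one down to the unit sphere cannot increase its distance to a unit vector (the computation in the $C = 0$ case), so goodness survives normalization; you need to add that step. A smaller quibble: $\eta \le 2\sqrt{\eta + \epsilon_1}$ does not by itself give $\eta \le \epsilon_2$, since $\epsilon_2$ may be the $4(\eta+\epsilon_1)/\delta$ branch of the minimum; but $\eta \le 4(\eta + \epsilon_1)/\delta$ holds anyway because $\delta \le 1$, so $\eta = O(\epsilon_2)$ and your final error bound is unaffected.
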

\begin{proof}
The idea is to take the angle estimation algorithm in \cref{RobustAngleApproximation} and run it for each \(i \in [k]\) to get an \(\eta_i + \frac{\epsilon_1}{k}\) approximation of \(\bra{\phi_i}\ket{\psi} = A_i\) with probability \(1 - \frac{\gamma}{k}\). By a union bound, the output is nice with probability at least \(1 - \gamma\). Then we can calculate \(|C|\) by taking it so the squares of \(A\) and \(C\) sum to one. That is, we use \(A\) to estimate $C^2$ and take the square root. If the sums of squares of $A$ are greater than one, then let \(C = 0\) and normalize.

Suppose the output is nice. Then we have a couple cases on the calculated $C$:
\begin{itemize}
    \item[$C\neq0$] First lets look at the calculated difference.
    \begin{align*}
        \|\ket{\theta} - \ket{\psi}\| =& \|\sum_{i \in k} A_i \ket{\phi_i} + C \ket{\phi} - \ket{\psi}\| \\
        = & \sqrt{\sum_{i \in k} \|A_i - \bra{\phi_i}\ket{\psi}\|^2 + \|C - \bra{\phi}\ket{\psi}\|^2} \\
        \leq & \sqrt{\sum_{i \in k}(\eta_i + \frac{\epsilon_1}{k})^2 + |C - \bra{\phi}\ket{\psi}|^2} \\
        \leq & \sqrt{(\eta + \epsilon_1)^2 + |C - \bra{\phi}\ket{\psi}|^2}.
    \end{align*}
    
    Now we need to bound \(|C - \bra{\phi}\ket{\psi}|^2\). First we will switch to some vector notation for simplicity. Let \(u = (A_1, ...,A_k)\) and \(v = (\bra{\phi_1}\ket{\psi}, ..., \bra{\phi_k}\ket{\psi})\). Then we can write
    \begin{align}
        C =& \sqrt{1 - \|u\|^2} \\
        \Delta((\ket{\phi_i})_{i \in [k]}, \ket{\psi}) = \bra{\phi}\ket{\psi} =& \sqrt{1 - \|v\|^2} \geq \delta. \label{vdeltaBound}
    \end{align}
    
    For each \(i\), we have that \(|v_i - u_i| \leq \eta_i + \frac{\epsilon_1}{k}\). By a tailor approximation for \(x^2\), for any \(a, b \in (-1, 1)\), \(|a^2 - b^2| \leq 2|a - b|\). This gives \(|v_i^2 - u_i^2| \leq 2(\eta_i + \frac{\epsilon_1}{k})\). By a triangle inequality, we have that 
    \begin{equation} \label{SquareBound}
        |\|u\|^2 - \|v\|^2| \leq 2\sum_{i \in [k]}\left(\eta_i + \frac{\epsilon_1}{k}\right) = 2\left(\eta + \epsilon_1\right).
    \end{equation}
    
    So we really are comparing difference of square roots in terms of the differences of their arguments:
    \begin{align*}
        |C - \bra{\phi}\ket{\psi}| = & |\sqrt{1 - \|u\|^2} - \sqrt{1 - \|v\|^2}| .
    \end{align*}
    
    \begin{itemize}
    \item 
        First lets handle the general case with no lower bound on \(\bra{\phi}\ket{\psi}\). Since the square root function is positive, monotone increasing, and concave, we have for all \(a, b \geq 0\), \(|\sqrt{b} - \sqrt{a}| \leq \sqrt{|b-a|}\). Thus
        \begin{align*}
            |C - \bra{\phi}\ket{\psi}| = & |\sqrt{1 - \|u\|^2} - \sqrt{1 - \|v\|^2}| \\
            \leq & \sqrt{|\|u\|^2 - \|v\|^2|} \\
            \leq & \sqrt{2(\eta + \epsilon_1)}.
        \end{align*}
        If \(\eta + \epsilon_1 > 1\), then trivially we have \(\|\ket{\theta} - \ket{\psi}\| \leq 2 \leq 2\sqrt{\eta + \epsilon_1}\). Otherwise:
        \begin{align*}
            \|\ket{\theta} - \ket{\psi}\| = & \sqrt{2(\eta + \epsilon_1)^2 + \sqrt{2(\eta + \epsilon_1)}^2} \\
            \leq & \sqrt{2(\eta + \epsilon_1)^2 + 2(\eta + \epsilon_1)} \\
            \leq & 2\sqrt{\eta + \epsilon_1} .
        \end{align*}
    
    \item Now let us assume \(2(\eta + \epsilon_1) < \delta^2\). First, rewrite \(\sqrt{1 - \|u\|^2}\) as
        \begin{align*}
            C = & \sqrt{1 - \|u\|^2} \\
            = & \sqrt{1 - \|v\|^2 + \|v\|^2 - \|u\|^2)} \\
            = & \sqrt{(1 - \|v\|^2)\left(1 + \frac{\|v\|^2 - \|u\|^2}{1 - \|v\|^2}\right)} \\
            = & \sqrt{1 - \|v\|^2}\sqrt{1 + \frac{\|v\|^2 - \|u\|^2}{1 - \|v\|^2}}.\\
        \end{align*}
        Now when \(\|v\|^2 - \|u\|^2 \ll 1 - \|v\|^2\), we have \(\sqrt{1 + \frac{\|v\|^2 - \|u\|^2}{1 - \|v\|^2}} \sim 1 + \frac{1}{2}\frac{\|v\|^2 - \|u\|^2}{1 - \|v\|^2}\) from a binomial approximation. In particular
        \begin{equation} \label{BinomialApproximation}
            \left|\sqrt{1 + \frac{\|v\|^2 - \|u\|^2}{1 - \|v\|^2}} - 1 - \frac{1}{2}\frac{\|v\|^2 - \|u\|^2}{1 - \|v\|^2}\right| \leq \frac{1}{4}\left(\frac{\|v\|^2 - \|u\|^2}{1 - \|v\|^2}\right)^2 .
        \end{equation}
        
        Now when we calculate the difference, we get:
        \begin{align*}
            |C - \bra{\phi}\ket{\psi}| = & \left| \sqrt{1 - \|u\|^2} - \sqrt{1 - \|v\|^2} \right| \\
            = & \left| \sqrt{1 - \|v\|^2}\sqrt{1 + \frac{\|v\|^2 - \|u\|^2}{1 - \|v\|^2}} - \sqrt{1 - \|v\|^2} \right| \\
            = & \sqrt{1 - \|v\|^2} \left|\sqrt{1 + \frac{\|v\|^2 - \|u\|^2}{1 - \|v\|^2}} - 1 \right| \\
            \leq & \sqrt{1 - \|v\|^2}\left( \left| 1 + \frac{1}{2}\frac{\|v\|^2 - \|u\|^2}{1 - \|v\|^2} - 1 \right| + \frac{1}{4}\left(\frac{\|v\|^2 - \|u\|^2}{1 - \|v\|^2}\right)^2 \right)\\
            \leq & \sqrt{1 - \|v\|^2}\left(\frac{1}{2}\frac{|\|v\|^2 - \|u\|^2|}{1 - \|v\|^2} + \frac{1}{4}\frac{\|v\|^2 - \|u\|^2}{1 - \|v\|^2}\frac{\|v\|^2 - \|u\|^2}{1 - \|v\|^2} \right)\\
            \leq & \frac{|\|v\|^2 - \|u\|^2|}{\sqrt{1 - \|v\|^2}}\left(\frac{1}{2} + \frac{1}{4}\frac{|\|v\|^2 - \|u\|^2|}{1 - \|v\|^2}\right) .
        \end{align*}
        
        Finally, recalling that that \cref{vdeltaBound}, \cref{SquareBound} and our assumption that \(2(\eta + \epsilon_1) < \delta^2\), we have
        \begin{align*}
            |C - \bra{\phi}\ket{\psi}| \leq & \frac{|\|v\|^2 - \|u\|^2|}{\sqrt{1 - \|v\|^2}}\left(\frac{1}{2} + \frac{1}{4}\frac{|\|v\|^2 - \|u\|^2|}{1 - \|v\|^2}\right) \\
            \leq & \frac{2(\eta + \epsilon_1)}{\delta}\left(\frac{1}{2} + \frac{1}{4}\frac{2(\eta + \epsilon_1)}{\delta^2}\right) \\
            \leq & \frac{2(\eta + \epsilon_1)}{\delta} .
        \end{align*}
        
        So together
        \begin{align*}
            \|\ket{\theta} - \ket{\psi}\| = & \sqrt{2(\eta + \epsilon_1)^2 + |C - \bra{\phi}\ket{\psi}|^2} \\
            \leq & \sqrt{2(\eta + \epsilon_1)^2 + \left(\frac{2\left(\eta + \epsilon_1\right)}{\delta}\right)^2} \\
            \leq & 4 \frac{\eta + \epsilon_1}{\delta}  .
        \end{align*}
        
        So \( \|\ket{\theta} - \ket{\psi}\| \leq \epsilon_2\).
        
    \end{itemize}
    
    \item[$C = 0$]
    Following a similar proof to the last case, the unnormalized version of the formula satisfies the relation. The concern is that normalization might make the difference worse. Well suppose we have two vectors, \(u, v\), with \(\|u\| > \|v\| = 1\). We would like to show \(\|\frac{u}{\|u\|} - v\| \leq \|u - v\|\). Well,
    \begin{align*}
        \|\frac{u}{\|u\|} - v\|^2 = & \sum_i \left(\frac{u_i}{\|u\|} - v_i\right)^2 \\
        = & \sum_i \frac{u_i^2}{\|u\|^2} + \sum_i v_i^2 - 2 \sum_i \frac{u_i}{\|u\|} v_i \\
        = & 2 - 2 \frac{1}{\|u\|} \sum_i u_i v_i \\
        \leq & 2\|u\| - 2 \sum_i u_i v_i \\
        \leq & \|u\|^2 + 1 - 2 \sum_i u_i v_i \\
        = & \sum_i u_i^2 + \sum_i v_i^2 - 2 \sum_i u_i v_i \\
        = & \sum_i (u_i - v_i)^2 \\
        = & \|u - v\|^2 .
    \end{align*}
\end{itemize}
That proves that when the output is ``nice'', the output is ``good'', and we know by how we estimated the angles the output is nice with probability at least $1- \gamma$. Note, this actually gives an \(\epsilon_2\) approximation of the rotation $R$ required for \cref{Orthogonalization}.

Now to compute $U$, we just use \cref{Orthogonalization} to get a circuit $A$. All that remains to be done is verifying the size and approximation is what we claimed. Well $A$ has size \(O \left(\frac{k \ln(1/\epsilon_3)}{\delta^2}\right)\), and would have error \(\epsilon_3\) if each \(U_i\) and \(R\) was a 0 approximation. Instead, each \(U_i\) is an \(\eta_i\) approximation and are each called \(O \left(\frac{\ln(1/\epsilon_3)}{\delta^2}\right)\) times while \(R\) is an \(\epsilon_2\) approximation called \(O \left(\frac{\ln(1/\epsilon_3)}{\delta^2}\right)\) times. Thus $A$ is an
\begin{align*}
    & O \left(\frac{\ln(1/\epsilon_3)}{\delta^2}\right) \left(\epsilon_2 + \sum_{i \in k}\eta_i\right) + \epsilon_3 \\
    =& O \left((\eta + \epsilon_2) \frac{\ln(1/\epsilon_3)}{\delta^2}\right) + \epsilon_3 \\
    =& O \left(\epsilon_2 \frac{\ln(1/\epsilon_3)}{\delta^2}\right) + \epsilon_3 \\
\end{align*}
approximation.

\end{proof}

Now applying \cref{ApproximateOrthogonalComponents} in the straightforward way gives a weak version of \cref{ApproximateBasis}. But we might get unlucky and constantly get small \(\delta \sim \epsilon\). This would rule out having small \(\epsilon\) and large \(k\) at the same time for a polytime algorithm. We have to be careful to construct our orthogonal components in an order that only has non constant \(\delta\) when orthogonalizing small circuits. We achieve this in two ways.

\begin{enumerate}
    \item We can actually orthogonalize with respect to parts of our basis at a time. So if $\delta$ is not constant, we orthogonalize with respect to the parts of the previously computed basis that have ``small'' circuits. Hopefully the resulting state has a large \(\delta\).
    
    \item Every time the above procedure fails to give a state with constant \(\delta\), we reset the group of circuits we consider ``small'' to be all the circuits.
    
    The key insight here is the above resetting can only occur at most \(O(\ln(k/\epsilon))\) times, after which point the initial \(\delta\) must be less than \(\epsilon\).
\end{enumerate}

The straightforward thing is to orthogonalize each \(V_i\) in order. But this may lead having to pay the \(\frac{1}{\epsilon}\) factor $k$ times. So instead, each component we orthogonalize will proceed in two steps: orthogonalize on the small circuits, orthogonalize on the rest. To do this, we will keep track of an ``orthogonalization depth'' which will intuitively count the number of times we had to pay the \(\frac{1}{\epsilon}\) factor. But specifically, it will maintain the number of times we added more circuits to our set of ``small circuits''.

Now we have all the tools to prove \cref{ApproximateBasis}.

\ApproximateBasis*
\begin{proof}
    We will use \(\epsilon_1 > 0\) as some constant that will be decided later which will be polynomial in the size of the circuit. Similarly, we have \(\gamma' > 0\) as the chance measurement at any step fails. These are the constants we will use with \cref{ApproximateOrthogonalComponents}.
    
    So let \(A\) be the constant from the last line in \cref{ApproximateOrthogonalComponents} for the case where \(2(\eta + \epsilon_1) \leq \delta^2\). That is, when the output is good, we have an
    \[A \frac{(\eta + \epsilon_1)\ln(1/\epsilon_1)}{\delta^3} \]
    approximation (where $\epsilon_3 = \epsilon_1$. Similarly, the resulting circuit for some constant $C$, makes less than
    \[C \frac{\ln(1/\epsilon_1)}{\delta^2}\]
    calls to each oracle and size less than
    \[C k \frac{\ln(1/\epsilon_1)}{\delta^2}.\]
    
    The proof is then by induction on each added basis state and circuit to approximate it.
    \begin{description}
        \item[Base Case] To start, let:
        \begin{enumerate}
            \item \(d_0 = 0\), our initial orthogonalization depth.
            \item $Y_0 = \emptyset$, the basis we know how to compute so far.
            \item \(X_0 = \emptyset\), the circuits preparing $Y_0$.
            \item \(Y'_0 = \emptyset\), the states $X'_0$ approximately produce.
            \item \(X'_0 = \emptyset\), the initial set of "small" circuits.
            \item \(Z_0 = \emptyset\), are the indices $i$ that we have found are in $\ket{\psi_i}$ are in our found basis.
        \end{enumerate}
        \item[Assume]
            Let \(j \in \{0,...,k-1\}\). Then the inductive assumption is:
            \begin{enumerate}
                \item $d_j = d$ is our current orthogonalization depth.
                \item \(Y_j\) is a set of at most $j+1$ orthonormal states, in the span of $(\ket{\psi_i})_{i \in [k]}$.
                \item \(X_j\) of circuits, where for \(\ket{a} \in Y_j\) there is a \(x_a \in X_j\) that \(\eta_{x_a}\) approximates a unitary preparing \(\ket{a}\).
                \begin{enumerate}
                    \item 
                    \[\eta = \sum_{x \in X_j} \eta_{x_a} \leq (512 A \ln(1/\epsilon_1))^j \left(8192 A\frac{k^3 \ln(1/\epsilon_1)}{\epsilon^3}\right)^d \epsilon_1 \]
                    \item \[\sum_{x \in X_j} |x| \leq \left(64 C \ln(k/\epsilon_1)\right)^j \left(\frac{256 C k^2\ln(k/\epsilon_1)}{\epsilon^2}\right)^{d}.\]
                \end{enumerate}
                \item \(Y'_j\) is a set of at most $j+1$ orthonormal states.
                \item \(X'_j\) is a set of circuits, where for \(\ket{a} \in Y'_j\) there is a \(x_a \in X'_j\) that \(\eta_{x_a}\) approximates a unitary preparing \(\ket{a}\).
                \begin{enumerate}
                    \item \[\eta' = \sum_{x \in X'_j} \eta_{x_a} \leq (512 A \ln(1/\epsilon_1))^j \left(8192 A \frac{k^3 \ln(1/\epsilon_1)}{\epsilon^3}\right)^{d-1} \epsilon_1 \]
                    \item \[\sum_{x \in X'_j} |x| \leq \left(64 C \ln(k/\epsilon_1)\right)^j \left(\frac{256 C k^2\ln(k/\epsilon_1)}{\epsilon^2}\right)^{d-1}.\]
                \end{enumerate}
                \item \(Z_j \subseteq [k]\) with \(|Z_j| = j\) and
                \begin{enumerate}
                    \item for all \(i \in Z_j\), \(\Delta(Y_j, \ket{\psi_i}) \leq \frac{\epsilon}{4k}\).
                    \item for \(i \in [k] \setminus Z_j\), we have \(\Delta(Y'_j, \ket{\psi_i})^2 \leq \left(\frac{2}{3}\right)^d\).
                \end{enumerate}
            \end{enumerate}
            
        \item[Implies]
            First, use \cref{ApproximateOrthogonalComponents} to calculate for each \(i \in [k] \setminus Z_j\) the components of \(\ket{\psi_i}\) along each state in \(Y_j\). We do this with parameters such that it will succeed for every $i$ with additive error at most \(\eta_i + \frac{\epsilon_1}{k}\) with probability at least \(1 - k \gamma'\). Suppose it succeeds (that is, our approximation is ``nice'' for each $i$).  This takes only polynomial time in $k, 1/\epsilon_1, \ln(1/\gamma')$ and the size of any circuits in $X_j$.
            
            Using these components, we can approximate the distance of $\ket{\psi_i}$ to $Y'_j$, and also the distance from $Y_j$ of the orthogonal component of $\ket{\psi_i}$ from $Y_j$. That is, we can approximate $\Delta(Y'_j, \ket{\psi_i})$ and \(\Delta(Y_j, \Phi(Y'_j, \ket{\psi_i}))\), $\Delta(Y_j, \ket{\psi_i})$ directly. We will call the function that computes these approximate orthogonal components \(\Delta'\).
            
            Suppose for some $i$
            $$\Delta'(Y_j, \ket{\psi_i}) < \frac{\epsilon}{8k}.$$
            Well \(\Delta'(Y_j, \ket{\psi_i})\) is off by at most $2 \sqrt{\eta + \epsilon_1}$, since the output is ``good'' from \cref{ApproximateOrthogonalComponents}. We will choose \(\epsilon_1\) so that this is less than \( \frac{\epsilon}{16k}\). Then
            $$\Delta(Y_j, \ket{\psi_i}) < \frac{\epsilon}{4k}.$$
            Then let
            \begin{enumerate}
                \item $d_{j+1} = d_j$.
                \item $Y_{j+1} = Y_j$.
                \item $X_{j+1} = X_j$.
                \item $Y'_{j+1} = Y'_j$.
                \item $X'_{j+1} = X'_j$.
                \item \(Z_{j+1} = Z_j \cup \{i\}\).
            \end{enumerate}
            Then all the assumptions still hold, and we can move onto the next $j$.
            
            Otherwise, see that we have 
            $$\delta'_i = \Delta(Y'_j, \ket{\psi_i}) \geq \Delta(Y_j, \ket{\psi_i}) \geq \frac{\epsilon}{16k}$$
            by a similar argument.

            Now, for each \(i \in [k] \setminus Z_j\) calculate a numerical approximation of:
            \begin{align*}
                \delta^*_i \sim & \Delta(Y_j, \Phi(Y'_j, \ket{\psi_i})) \\
                = & \frac{\Delta(Y_j,  \ket{\psi_i})}{\Delta(Y'_j,  \ket{\psi_i})}.
            \end{align*}
            
            Suppose $2\sqrt{ (\eta + \epsilon_1)} < \frac{\epsilon^2}{1000 k^2}$. We will later choose \(\epsilon_1\) so this is true. Then the error of \(\delta'_i\) will be at most the error of \(\Delta(Y_j, \Phi(Y'_j, \ket{\psi_i}))\) (at most \(2\sqrt{ (\eta + \epsilon_1)}\)), divided by \(\Delta(Y'_j,  \ket{\psi_i}) \) (at least \(\frac{\epsilon}{16k}\)). So the error of $\delta'_i$ is at most \( \frac{\epsilon}{16k} < \frac{1}{4}\).
            
            Either for some $i$, \(\delta^*_i > \frac{1}{2}\), or for all it is not.
            \begin{itemize}
                \item $\exists i \in [k] \setminus Z_j: \delta^*_i > \frac{1}{2}$: then take such i. Then
                \[\delta_i = \Delta(Y_j, \Phi(Y'_j, \ket{\psi_i})) > \frac{1}{4}.\]
                
                Now we want to preprocess $V_i$ to get a $V'_i$ preparing $\ket{\psi_i'}$ that is orthogonal to $Y'_j$. If we have \(Y'_j = \emptyset\) and $d=0$, then just let \(V'_i = V_i\) and \(\ket{\psi'_i} = \ket{\psi_i}\) and move onto the next step.
                
                Otherwise, run \cref{ApproximateOrthogonalComponents} using \(V_i\) and \(X'_j\) to get circuit \(V'_i\) that approximately computes \(\ket{\psi'_i} = \Phi(Y'_j, \ket{\psi})\). Notice that we already assumed the output is ``nice'' since we have already approximated these angles with high enough accuracy.
                
                Now we already have that $\delta'_i > \epsilon/16k$. Then for sufficiently small $\epsilon_1$, we have \(V'_i\) is a circuit $\eta'_i$-approximately preparing \(\ket{\psi'_i}\) where
                \begin{align*}
                    \eta_i' \leq & A \frac{(\eta' + \epsilon_1)\ln(1/\epsilon_1)}{\delta'^3_i} \\
                    \leq & A 8192 k^3  \frac{\ln(1/\epsilon_1)}{\epsilon^3} \eta' \\
                    \leq & A 8192 k^3  \frac{\ln(1/\epsilon_1)}{\epsilon^3}  (512 A \ln(1/\epsilon_1))^j \left(8192 A \frac{k^3 \ln(1/\epsilon_1)}{\epsilon^3}\right)^{d-1} \epsilon_1 \\
                    \leq & (512 A \ln(1/\epsilon_1))^j \left(8192 A \frac{k^3 \ln(1/\epsilon_1)}{\epsilon^3}\right)^{d} \epsilon_1 .
                \end{align*}
                Further, \(V'_i\) has size
                \begin{align*}
                    |V'_i| \leq &C \frac{\ln(k/\epsilon_1)}{\delta'^2_i} \sum_{x \in X'_j} |x| \\
                    \leq & C 256 k^2 \frac{\ln(k/\epsilon_1)}{\epsilon^2} \sum_{x \in X'_j} |x| \\
                    \leq & C 256 k^2 \frac{\ln(k/\epsilon_1)}{\epsilon^2} \left(64 C \ln(k/\epsilon_1)\right)^j \left(\frac{256 C k^2\ln(k/\epsilon_1)}{\epsilon^2}\right)^{d-1} \\
                    \leq & \left(64 C \ln(k/\epsilon_1)\right)^j \left(\frac{256 C k^2\ln(k/\epsilon_1)}{\epsilon^2}\right)^{d} .
                \end{align*}
                
                Now given \(V'_i\) approximating \(\ket{\psi_i'}\), we run \cref{ApproximateOrthogonalComponents} again to get circuit \(U_i\) which approximates \(\ket{\phi_i} = \Phi(Y_j, \ket{\psi_i'})\). This only takes a polynomial sized circuit and fails with probability at most $\gamma'/k$.
                
                Then the error on \(U_i\), \(\eta_i\), is
                \begin{align*}
                    \eta_i \leq & A \frac{(\eta + \eta_i' + \epsilon_1)\ln(1/\epsilon_1)}{\delta_i^3} \\
                    \leq & A 128 (\eta + \eta'_i)\ln(1/\epsilon_1) \\
                    \leq & A 256 \ln(1/\epsilon_1) (512 A \ln(1/\epsilon_1))^j \left(8192 A \frac{k^3 \ln(1/\epsilon_1)}{\epsilon^3}\right)^{d} \epsilon_1 \\
                    \leq & \frac{1}{2}(512 A \ln(1/\epsilon_1))^{j+1} \left(8192 A \frac{k^3 \ln(1/\epsilon_1)}{\epsilon^3}\right)^{d} \epsilon_1.
                \end{align*}
                And for the size of $U_i$, 
                \begin{align*}
                    |U_i| \leq & C \frac{\ln(k/\epsilon_1)}{\delta^2} (|V'_i| + \sum_{x \in X_j} |x|)\\
                     \leq & C 16 \ln(k/\epsilon_1) 2 \left(64 C \ln(k/\epsilon_1)\right)^j \left(\frac{256 C k^2\ln(k/\epsilon_1)}{\epsilon^2}\right)^{d}\\
                     \leq & \frac{1}{2}\left(64 C \ln(k/\epsilon_1)\right)^{j+1} \left(\frac{256 C k^2\ln(k/\epsilon_1)}{\epsilon^2}\right)^{d}.
                \end{align*}
                
                Finally, let
                \begin{enumerate}
                    \item \(d_{j+1} = d_j\).
                    \item \(Y_{j+1} = Y_j \cup \{\ket{\phi_i}\}\).
                    \item \(X_{j+1} = X_j \cup \{U_i\}\). One can easily verify that
                    \begin{enumerate}
                        \item \[\sum_{x \in X_{j+1}} \eta_{x_a} \leq (512 A \ln(1/\epsilon_1))^{j+1} \left(8192 A \frac{k^2\ln(1/\epsilon_1)}{\epsilon_1^3}\right)^{d} \epsilon_1. \]
                        \item \[\sum_{x \in X_{j+1}} |x| \leq \left(64 C \ln(k/\epsilon_1)\right)^{j+1} \left(\frac{256 C k^2\ln(k/\epsilon_1)}{\epsilon^2}\right)^{d}.\]
                    \end{enumerate}
                    \item \(Y'_{j+1} = Y'_j\)
                    \item \(X'_{j+1} = X'_j\). Size and error constraints still holds since no change.
                    \item \(Z_{j+1} = Z_j \cup \{i\}\).
                    \begin{enumerate}
                        \item For $i \in \Z_{j+1}$, we only added one $i$ to $Z_{j+1}$ and $\ket{\phi_i}$ to $Y_{j+1}$ so that 
                        \[\Delta(Y_{j+1}, \ket{\psi_i}) = 0\]
                        and the rest did not increase.
                        \item The case where $i \notin Z_{j+1}$ still holds since $d$ and $Y'_J$ is unchanged.
                    \end{enumerate}
                \end{enumerate}

                \item $\forall i \in [k] \setminus Z_j: \delta^*_i \leq \frac{1}{2}$. Then in this case, orthogonalizing with respect to \(Y'_j\) first won't help as much. So just choose some arbitrary \(i \in [k] \setminus Z_j\). \(Y'_{j+1} = Y_j\), \(X'_{j+1} = X_j\), \(Z_{j+1} = Z_j \cup \{i\}\), and \(d_{j+1} = d + 1\). Let
                \[\delta_i = \Delta(Y_j, \ket{\psi_i}) \geq \frac{\epsilon}{16k}.\]
                
                Use \cref{ApproximateOrthogonalComponents} with \(V_i\) and \(X_j\) to get \(U_i\) approximating \(\ket{\phi_i} = \Phi(Y_j, \ket{\psi_i})\). The error of \(U_i\) is
                \begin{align*}
                    \eta_i \leq & A \frac{(\eta + \epsilon_1)\ln(1/\epsilon_1)}{\delta_i^3} \\
                    \leq & A 8192 k^3  \frac{\ln(1/\epsilon_1)}{\epsilon^3} \eta \\
                    \leq & A 8192 k^3  \frac{\ln(1/\epsilon_1)}{\epsilon^3} (512 A \ln(1/\epsilon_1))^j \left(8192 A\frac{k^3\ln(1/\epsilon_1)}{\epsilon^3}\right)^d \epsilon_1 \\
                    \leq & (512 A \ln(1/\epsilon_1))^j \left(8192 A\frac{k^3\ln(1/\epsilon_1)}{\epsilon^3}\right)^{d+1} \epsilon_1 .
                \end{align*}
                and the size of $U_i$ is
                \begin{align*}
                    |U_i| \leq & C \frac{\ln(k/\epsilon_1)}{\delta^2} \sum_{x \in X_j} |x| \\
                    \leq & C 256 k^2 \frac{\ln(k/\epsilon_1)}{\epsilon^2} \sum_{x \in X_j} |x| \\
                    \leq & C 256 k^2 \frac{\ln(k/\epsilon_1)}{\epsilon^2} \left(64 C \ln(k/\epsilon_1)\right)^j \left(\frac{256 C k^2\ln(k/\epsilon_1)}{\epsilon^2}\right)^d \\
                    \leq & \left(64 C \ln(k/\epsilon_1)\right)^j \left(\frac{256 C k^2\ln(k/\epsilon_1)}{\epsilon^2}\right)^{d+1} .
                \end{align*}
                
                Then let
                \begin{enumerate}
                    \item \(d_{j+1} = d + 1\).
                    \item \(Y_{j+1} = Y_j \cup \{\ket{\phi_i}\).
                    \item \(X_{j+1} = X_j \cup \{U_i\}\), a quick calculation shows the size and approximation assumptions are satisfied.
                    \item \(Y'_{j+1} = Y_j\).
                    \item \(X'_{j+1} = X_j\), already satsifies size and approximation assumptions since $X_j$ satisfied its.
                    \item \(Z_{j+1} = Z_j \cup \{i\}\).
                    \begin{enumerate}
                        \item For $m \in Z_{j+1}$, we only added one $i$ which exactly has \(\Delta(Y_{j+1}, \ket{\psi_m}) = 0\). The rest have not increased.
                        \item Choose an \(m \in [k] \setminus Z_{j+1}\). We want to show 
                        \[\Delta(Y'_{j+1}, \ket{\psi_m})^2 \leq \left(\frac{2}{3}\right)^{d+1}.\]
                
                        By assumption of this case, \(\delta^*_m \leq \frac{1}{2}\). Because the error is less than \( \frac{\epsilon}{16k} < \frac{1}{16}\), we have
                        \begin{align*}
                            \frac{\Delta(Y_j, \ket{\psi_m})}{\Delta(Y'_j, \ket{\psi_m})} \leq & \frac{2}{3} \\
                            \Delta(Y_j, \ket{\psi_m})\leq & \frac{2}{3} \Delta(Y'_j, \ket{\psi_m}) .
                        \end{align*}
                        Since $Z_{j} \subseteq Z_{j+1}$, $m \in [k] \setminus Z_j$. So by inductive assumption \[\Delta(Y'_j, \ket{\psi_m}) \leq \left(\frac{2}{3}\right)^d.\]
                        Then
                        \begin{align*}
                            \Delta(Y'_{j+1}, \ket{\psi_m}) = & \Delta(Y_j, \ket{\psi_m}) \\
                            \leq & \frac{2}{3}\Delta(Y'_j, \ket{\psi_m}) \\
                             \leq & \frac{2}{3}\left(\frac{2}{3}\right)^d \\
                             \leq & \left(\frac{2}{3}\right)^{d_{j+1}}.
                        \end{align*}
                    \end{enumerate}
                \end{enumerate}
                
            \end{itemize}
            
            Thus our inductive hypothesis holds.
            
    \end{description}
    We can bound the final depth, $d$, by the fact that for $d$ to increase, we must not have any \(i \in [k] \setminus Z_j\) with
    \[\Delta(Y'_j, \ket{\psi_i}) \leq \frac{\epsilon}{16k},\]
    otherwise, we would have added that to \(Y\) first. Thus
    \begin{align*}
        \left(\frac{2}{3}\right)^{d-1} \geq & \frac{\epsilon}{16k} \\
        d \leq & O(\ln(k/\epsilon)).
    \end{align*}
    We also trivially have that \(d \leq k\). Thus
    \[d \leq O(\min\{k, \ln(k/\epsilon)\}).\]
    
    At the end of this induction, we will have circuits \(X\) giving us the required basis with cumulative error
    \begin{align*}
        \eta = & \sum_{x \in X} \eta_{x} \\
        \leq & (512 A \ln(1/\epsilon_1))^k \left(8192 A\frac{k^3 \ln(1/\epsilon_1)}{\epsilon^3}\right)^{O(\min\{k, \ln(k/\epsilon)\})} \epsilon_1 \\
        \leq & \poly\left(\ln(1/\epsilon_1)^k \left(\frac{k \ln(1/\epsilon_1)}{\epsilon}\right)^{\min\{k, \ln(k/\epsilon)\}}\right) \epsilon_1
    \end{align*}
    and size
    \begin{align*}
        \sum_{x \in X} |x| \leq & \left(64 C \ln(k/\epsilon_1)\right)^k \left(\frac{256 C k^2\ln(k/\epsilon_1)}{\epsilon^2}\right)^{O(\min\{k, \ln(k/\epsilon)\})} \\
        \leq & \poly\left(\left(\ln(k/\epsilon_1)\right)^k \left(\frac{k \ln(k/\epsilon_1)}{\epsilon^2}\right)^{\min\{k, \ln(k/\epsilon)\}}\right).
    \end{align*}
    
    This is the claimed size. Now we want to make sure this final error has \(\eta < \frac{\epsilon}{16k}\), and $\sqrt{3 (\eta + \epsilon_1)} < \frac{\epsilon^2}{1000 k^2}$, and inside of \cref{ApproximateOrthogonalComponents}, I also need \(2(\eta + \epsilon_1) \leq \left(\frac{\epsilon}{16k}\right)^2\). This holds for some 
    \[\epsilon_1 =e^{-\poly(k + \ln(1/\epsilon))}.\]
    Take \(\epsilon_1\) to be such that it is still polynomial in the output circuit size.
    
    Then the size is
    \begin{align*}
        \sum_{x \in X} |x| \leq & \poly\left(\left(k + \ln(1/\epsilon)\right)^k \left(\frac{k}{\epsilon}\right)^{\min\{k, \ln(k/\epsilon)\}}\right).
    \end{align*}
    Now to further simplify this, let us compare $k$ to $\ln(1/\epsilon)$.
    \begin{description}
        \item[$k \leq \ln(1/\epsilon)$]: Then
        \begin{align*}
            \left(k + \ln(1/\epsilon)\right)^k \left(\frac{k}{\epsilon}\right)^{\min\{k, \ln(k/\epsilon)\}} \leq &
            \left(2\ln(1/\epsilon)\right)^{k} \left(\frac{\ln(1/\epsilon)}{\epsilon}\right)^{k} \\
            \leq & \poly\left( \left(\ln(1/\epsilon)\right)^{k} \left(\frac{1}{\epsilon}\right)^{k} \right) \\
            \leq & \poly\left(\left(\frac{1}{\epsilon}\right)^{k} \right) \\
            \leq & \poly\left(k^k \left(\frac{1}{\epsilon}\right)^{\min\{k, \ln(1/\epsilon)\}} \right).
        \end{align*}
        \item[$k > \ln(1/\epsilon)$]: Then
        \begin{align*}
            \left(k + \ln(1/\epsilon)\right)^k \left(\frac{k}{\epsilon}\right)^{\min\{k, \ln(k/\epsilon)\}}
            \leq & \left(2k\right)^k \left(\frac{k}{\epsilon}\right)^{\min\{k, \ln(k/\epsilon)\}} \\
            \leq & \poly \left(k^k \left(\frac{k}{\epsilon}\right)^{\ln(k/\epsilon)} \right) \\
            \leq & \poly \left(2^{\ln(k) k + \ln(k/\epsilon)^2} \right) \\
            \leq & \poly \left(2^{\ln(k) k + \ln(k)^2 + \ln(k)\ln(1/\epsilon) + \ln(1/\epsilon)^2} \right) \\
            \leq & \poly \left(2^{\ln(k) k + \ln(1/\epsilon)^2} \right) \\
            \leq & \poly \left(k^k \left(\frac{1}{\epsilon}\right)^{\ln(1/\epsilon)} \right) \\
            \leq & \poly \left(k^k \left(\frac{1}{\epsilon}\right)^{\min\{k, \ln(1/\epsilon)\}} \right).
        \end{align*}
    \end{description}
    In either case, 
    \begin{align*}
        \sum_{x \in X} |x| \leq & \poly \left(k^k \left(\frac{1}{\epsilon}\right)^{\min\{k, \ln(1/\epsilon)\}} \right).
    \end{align*}

    Our angle approximations were all computed with polynomial sized circuits, and the total risk of failure is at most \(\gamma'\) times a polynomial in the size of our output circuits. The size of each angle approximation circuit is of size polynomial in \(1/\epsilon_1\) and k, which is also polynomial in the size of the output circuit,  times \(\ln(1/\gamma')\). Then the probability of any approximation failing is at most \(\gamma'\) times the size of the circuit, but by taking \(\gamma' = \gamma/(\sum_{x \in X} |x| )\), the circuit producing circuit still has polynomial size in the circuit output size times \(\ln(1/\gamma)\).
\end{proof}

\subsection{Proving Unitarization}

And using \cref{DefferedRotation} and \cref{ApproximateBasis} gives our main theorem, \cref{CleanUnitary}.

\CleanUnitary*

\begin{proof}
    Let $\epsilon_1 = \frac{\epsilon}{100000 k^3}$. Run \cref{ApproximateBasis} to get with probability $1 - \epsilon/2$ an $\epsilon_1$ approximate orthonormal basis, $Y = (\ket{\phi_i})_{i \in [m]}$, prepared by $X$, for  $(\ket{i-1})_{i \in [k]} \cup (\ket{\psi_i})_{i \in [k]}$.
    
    Then run \cref{RobustAngleApproximation} to, with probability $1 - \epsilon/2$, approximate the components of each $\ket{\psi_i}$ along each state in $Y$. Call the estimate of $\bra{\phi_j}\ket{\phi_i}$ as $v^j_i$. We run \cref{RobustAngleApproximation} with parameters so that the error on $v^j_i$ is at most $\eta_j + \epsilon_1/(2k)$ (where $\eta_j$ is the error of the circuit producing $\ket{\phi_j}$). Then the total error of the approximation for $\ket{\psi_i}$ using $v_i$ is at most $\epsilon_1 + \epsilon_1 + \epsilon_1 = 3 \epsilon_1 $ from the error on $\eta_x$, the error on the approximation, and the component of $\ket{\psi_i}$ outside $Y$.
    
    Doing this for all $\ket{\psi_i}$ with probability $1 - \epsilon/2$ gives us an $3 \epsilon_1$ approximation of each $\ket{\psi_i}$. Then after applying \cref{ExtrapolateUnitary} with our $v$ as the first columns, we get the coefficients $v'$ we need for \cref{DefferedRotation}. For $i \in [k]$, using $v'_i$ to approximate $\ket{\psi_i}$ gives a state, $\ket{\psi'_i}$, which is only $195 k \epsilon_1$ off from $\ket{\psi_i}$. Then applying \cref{DefferedRotation} with $v'$ and $X$ give an $O(\epsilon_1)$ (for some constant less than 100) approximation of the unitary preparing $\ket{\psi'_i}$ from $\ket{i-1}$ and changing no state orthogonal to $Y$.
    
    The resulting circuit $\epsilon_2 = 300 k \epsilon_1$ approximates a unitary, $U$, acting only on the subspace spanned by $Y$ such that $U\ket{i-1}$ gets mapped to a $\ket{\psi_i'}$ which is within $ \epsilon_2$ of $\ket{\psi_i}$. This is almost good enough, but we want to show we approximate the unitary that exactly takes $\ket{i-1}$ to $\ket{\psi_i}$. But this is actually equivalent, within a factor of $k$.
    
    Let $Y'$ be the span of $Y$, $X'$ be the span of $(\ket{i-1})_{i \in [k]}$, and $Z'$ be the space spanned by $(\ket{i-1})_{i \in [k]} \cup (\ket{\psi_i})_{i \in [k]}$. Then by choice of $Y$, $Y' \subseteq Z$.
    
    What we have is an $\epsilon_2$ approximation of a unitary, $U$, acting only on $Y'$, taking $\ket{i-1}$ to $\ket{\psi'_i}$, which is within $\epsilon_2$ of $\ket{\psi_i}$. What we need is an approximation of a unitary acting only in $Z$ that exactly takes $\ket{i-1}$ to $\ket{\psi_i}$, $U'$. I construct such a $U'$ and show that $U$ already approximates it.
    
    We describe $U$ and $U'$ by describing what they do on some basis for $Z$. Let $V_X$ be the computational basis for $X'$, $V_Y$ some basis for $Y' \setminus X' $, and $V_Z$ be some basis for $Z' \setminus Y'$. Then we know $U$ takes $V_X$ to $V'_X = (\ket{\psi'_i})_{i \in [k]}$, $U$ takes $V_Y$ to some orthonormal $V'_Y$ in $Y'$, and $U$ is identity on $V_Z$. Let $V = V'_X \cup V'_Y \cup V_Z$. If we can find a basis containing $(\ket{\psi'_i})_{i \in [k]}$ and approximating $V'_Y$ and $V_Z$, we are done.
    
    Luckily, we have \cref{GramSchmidtIsStab}. So let $W_X = (\ket{\psi_i})_{i \in [k]}$. Then run Gram-Schmidt on $W' = W_X \cup V'_Y \cup V_Z$, in that order to get basis $W$. Each element of $W'$ $\epsilon_2$ approximates the orthonormal vectors, $V$. Thus by \cref{GramSchmidtIsStab}, $W$ $65(2k) \epsilon_2$ approximates the orthonormal vectors $V$. Thus the unitary $U'$ that maps $V_X \cup V_Y \cup V_Z$ to $W$ instead of $V$ is only $130k\epsilon_2$ away from $U$.
    
    That is, $U$ $130k\epsilon_2$ approximates $U'$. And $U'$ exactly computes $\ket{\psi_i}$ from $\ket{i-1}$, and is identity on anything orthogonal to $Z$. Then the circuit
    \begin{align*}
        300 k \epsilon_1 + 130k\epsilon_2 \leq & 300 k \epsilon_1 + 40000 k^2\epsilon_1 \\
        \leq & \epsilon
    \end{align*}
    approximates $U'$.
    
    The most expensive step in this procedure is \cref{ApproximateBasis} , which takes time
    \begin{align*}
        \poly(k^k \left(\frac{1}{\epsilon_1}\right)^{\min\{k, \ln(1/\epsilon_1)\}} = & \poly(k^k \left(\frac{k}{\epsilon}\right)^{\min\{k, \ln(k/\epsilon)\}} \\
        = & \poly(k^k \left(\frac{1}{\epsilon}\right)^{\min\{k, \ln(1/\epsilon)\}}.
    \end{align*}
    With the last line following from the same sort of argument used at the end of \cref{ApproximateBasis}.
    
    Only two steps could have failed, \cref{ApproximateBasis} and \cref{ExtrapolateUnitary}, both fail with probability at most $\epsilon/2$, so we succeed with probability at least $1 - \epsilon$.
\end{proof}

\section{Future Work}

This algorithm gives a way to synthesize non unitary quantum circuits into a combined unitary operation, but has several limitations. First, it is not an exact algorithm. Doing this exactly seems hard as previously described, but we could not rule it out. Is there an exact solution? Perhaps even one using a constant number of queries per oracle?

It is relatively slow, requiring running time polynomial in \(k^k\). Is this fundamental? Can a better algorithm improve this to polynomial dependence on $k$? Or even just $2^{O(k)}$? Further, we only have polynomial time in \(1/\epsilon\) for constant $k$. Can we get polynomial dependence on $1/\epsilon$ for larger $k$?

One approach to better dependence on $\epsilon$ is a more fine grain approach to our different stages of orthogonalization. As you may recall, in our approach, we kept a set of cheaper, older orthogonal components, $A$, and a newer set $B$. We would try to find whichever new state was farthest from $B$ after removing the component in $A$, and add it. If that component was still too close, we would add $B$ to $A$ and continue. The insight is that after we do this a few times, all remaining states would need to be close enough to $A$ or $B$ to give an approximate basis.

Intuitively, the overhead from orthogonalizing $B$ is inversely proportional to the distance of the remaining states from $B$. And the closer $B$ is to the remaining the states, the fewer times we have to add $B$ to $A$. One approach could be to have many levels instead of two, and just add a new level each time we would normally add $B$ to $A$. This seems like a reasonable solution, but is much more complicated to analyze.

Another part of the algorithm that could possibly be improved is the orthogonal component algorithm itself. The algorithm feels to us almost like a classical algorithm. Perhaps there is some way to get some sort of a Grover speedup? This could improve the constants in the algorithm.

More importantly, we solved this problem largely because it seemed interesting on a conceptual level, but we have no application for it. Can this be used to solve any useful problem? This kind of algorithm may be useful for more linear algebra type problems when working with complex states given by quantum oracles, or when working with unitaries that act on low dimensional subspaces.

\section*{Acknowledgement}
This research was supported by NSF grant number 1705028. Thanks to Scott Aaronson, who accidentally gave me on this problem when I misunderstood a homework assignment from him.

\printbibliography

\appendix

\section{Proof of \ref{GramSchmidtIsStab}}
\GramSchmidtIsStab*
\begin{proof}
   For $i \in [k]$, define as $v_i^1 = v_i$. Then for $1 < j < i \leq k$, define
    \begin{align*}
        w_i^{j+1} = & w_i^j - \frac{\langle w_j^j, w_i^j \rangle}{\|w_j^j\|^2} w_j^j.
    \end{align*}
    Then Gram Schmidt just produces the set of vectors where 
    \[u_i = \frac{w_i^i}{\|w_i^i\|}.\]
    
    For each $1 \leq j \leq i \leq k$, for some unit vector $\phi_i^j$ in the span of $(w_i)_{i \in [j-1]}$ and some constant $\alpha_i^j > 0$
    \[w_i^j = w_i + \alpha_i^j \phi_i^j.\]
    Then
    \begin{align*}
        \langle w_j^j, w_i^j \rangle = & \langle w_j^j, w_i + \alpha_i^j \phi_i^j \rangle \\
        = & \langle w_j^j, w_i \rangle + \alpha_i^j \langle w_j^j, \phi_i^j \rangle \\
        = & \langle w_j + \alpha_j^j \phi_j^j , w_i \rangle \\
        = & \langle w_j, w_i \rangle + \alpha_j^j  \langle \phi_j^j , w_i \rangle.
    \end{align*}
    See that $\langle w_j^j, \phi_i^j \rangle = 0$, since $w_j^j$ is orthogonal to the span of $(w_i)_{i \in [j-1]}$. Since $w_j$ is within $\epsilon$ of $v_j$, $w_i$ is within $\epsilon$ of $v_i$, and $v_i$ is orthogonal to $v_j$,
    \begin{align*}
        \|\langle w_j, w_i \rangle\| \leq & 2\epsilon .
    \end{align*}
    Now for some \((\alpha_l)_{l \in [j-1]}\), 
    \[\ket{\phi_j^j} = \sum_{l \in [j-1]} \alpha_l w_l.\]
    Now we wish to bound the total size of the alphas. Then that for any $h \in [j-1]$, the following series of equations hold:
    \begin{align*}
        \langle v_h, \phi \rangle = & \sum_{l \in [j-1]} \alpha_l \langle v_h, w_l \rangle \\
        1 \geq & \|\langle v_h, \phi \rangle\|  \\
        \geq & \|\alpha_h\| - \epsilon - \sum_{l \in [j-1] \setminus \{l\}} \|\alpha_l\| 2 \epsilon \\
        j-1 \geq & \sum_{l \in [j-1]} \|\alpha_l\| - (j-2)\epsilon(1 +  \|\alpha_l\| 2) \\
        \geq & \sum_{l \in [j-1]} \|\alpha_l\| - \frac{1}{64}(1 +  \|\alpha_l\| 2) \\
        j \geq & \sum_{l \in [j-1]} \frac{31}{32}\|\alpha_l\| \\
        \|\langle v_h, \phi \rangle\| \geq & \|\alpha_h\| - \epsilon - \frac{32}{31} j 2 \epsilon \\
        \geq & \|\alpha_h\| - \epsilon - \frac{1}{31k} \\
        \geq & \|\alpha_h\| - \frac{\|\alpha_h\|}{16k} \\
        \|\langle v_h, \phi \rangle\|^2 \geq & \|\alpha_h\|^2 - \frac{1}{8k} \\
        1 = & \|\phi\| \\
        \geq & \sqrt{\sum_{l \in [j-1]} \|\langle v_h, \phi \rangle\|^2} \\
        \geq & \sqrt{\sum_{l \in [j-1]} \|\alpha_l\|^2 - \frac{\|\alpha_h\|}{8k}} \\
        \geq & \sqrt{ - \frac{1}{8} + \sum_{l \in [j-1]} \|\alpha_h\|^2} \\
        \frac{3}{\sqrt{8}} \geq & \sqrt{\sum_{l \in [j-1]} \|\alpha_h\|^2} .
    \end{align*}
    
    Then we can observe that
    
    \begin{align*}
        \|\langle \phi_j^j, w_i \rangle\| \leq &  \sum_{l \in [j-1]} \|\alpha_l\| \langle w_i, w_j \rangle \\
        \leq &  2 \epsilon \sqrt{k} \frac{3}{\sqrt{8}} \\
        \leq &  3 \epsilon \sqrt{k} .
    \end{align*}
    
    Now by induction, we will be able to show $\|w_i^j\| \geq 1/2$. This is clearly true for $j = 1$. Let
    \begin{align*}
        \gamma_i^j = & \frac{\langle w_j^j, w_i^j \rangle}{\|w_j^j\|^2} \\
         = & \frac{\langle w_j, w_i \rangle + \alpha_j^j  \langle \phi_j^j , w_i \rangle}{\|w_j^j\|^2} \\
         \|\gamma_i^j\| \leq & 4(2 \epsilon + \alpha_j^j  3 \epsilon \sqrt{k}) \\
         \leq & 12 \epsilon(1 + \alpha_j^j\sqrt{k}).
    \end{align*}
    Then
    \begin{align*}
        w_i^{j+1} = & w_i^j - \gamma_i^j w_j^j \\
        = & w_i + \alpha_i^j \ket{\phi_i^j} - \gamma_i^j(w_j + \alpha_j^j \ket{\phi_j^j}).
    \end{align*}
    From this we can conclude
    \begin{align*}
        \alpha_i^{j+1} \leq &\alpha_i^j + \|\gamma_i^j\|(1 + \alpha_j^j) \\
        \leq & \alpha_i^j + 12 \epsilon(1 + \alpha_j^j\sqrt{k})(1 + \alpha_j^j) .
    \end{align*}
    Then we will also have by induction each $\alpha_i^j \leq 32 j \epsilon$. At the same time, we will do induction on $\|w_i^j\|$.
    \begin{align*}
        \| w_i^j \| \geq & 1 - \alpha_i^j \\
        \geq 1/2.
    \end{align*}
    
    Then we can conclude for each $i$, $\alpha_i^i \leq 32 k \epsilon$. After normalizing $w_i^i$, the error is at most
    \begin{align*}
          \|w_i - u_i \| \leq & 2 \frac{\alpha_i^i}{1 - \alpha_i^i} \\
          \leq & 64 k \epsilon, \\
          \|u_i - v_i \| \leq & \|w_i - u_i \|  + \|v_i - u_i \| \\
          \leq & (64 k + 1) \epsilon .
    \end{align*}
    
\end{proof}

\end{document}